\documentclass[12pt,reqno]{amsart}

\usepackage[utf8]{inputenc}

\usepackage[left = 0.8in, right = 0.8in, top = 0.8in, bottom = 0.8in]{geometry}
\usepackage{amsmath,amsthm,amssymb}
\usepackage{bbm}
\usepackage{graphicx}
\usepackage{color}
\usepackage[sorting=none, url = false, doi = false, eprint = false, isbn = false]{biblatex}
\usepackage{cancel}
\usepackage{todonotes}
\usepackage{cleveref}
\usepackage[normalem]{ulem}

\AddToHook{cmd/appendix/before}{\def\cref@section@alias{appendix}}

\newtheorem{defn}{Definition}
\newtheorem{theorem}{Theorem}
\newtheorem{prop}{Proposition}
\newtheorem{lemma}{Lemma}

\newcommand{\R}{\mathbb{R}}
\newcommand{\N}{\mathbb{N}}

\newcommand{\C}{\mathbb{C}}

\newcommand{\mS}{\mathcal{S}}

\newcommand{\mD}{\mathcal{D}}
\newcommand{\mF}{\mathcal{F}}
\newcommand{\sgn}{\operatorname{sgn}}
\newcommand{\harm}{\frac{1}{2}k}
\newcommand{\fpsi}{\widehat{\phi}}

\newcommand{\la}{\left\langle}
\newcommand{\ra}{\right\rangle}

\DeclareMathOperator{\Airya}{Ai}
\DeclareMathOperator{\Airyb}{Bi}
\usepackage[dvipsnames]{xcolor}

\newcommand{\stkout}[1]{\ifmmode\text{\sout{\ensuremath{#1}}}\else\sout{#1}\fi}

\title{Spectral Theory of an Exactly Solvable Quantum-Optical Oscillator}

\author[]{Sam Johar}
\address{Department of Mathematics, Amherst College, Amherst, MA 01002, USA}
\email{sjohar27@amherst.edu}

\author[]{Joseph Kraisler}
\address{Department of Mathematics, Williams College, Williamstown, MA 01267, USA}
\email{jk34@williams.edu}

\date{\today}

\begin{document}

\begin{abstract}
We study a one-dimensional single-excitation model of a scalar photon field coupled to a continuum of two-level atoms with quadratic density. The resulting matrix Hamiltonian has an unbounded spatially varying coupling and is not semibounded. We prove that it is essentially self-adjoint and determine its spectrum explicitly. The continuous spectrum is the half-line $(-\infty,\Omega]$, where $\Omega$ is the atomic resonance frequency, while the spectrum above $\Omega$ consists of a strictly increasing sequence of simple eigenvalues. These eigenvalues are characterized by the interlaced zeros of the Airy function and its derivative and satisfy
\begin{align*}
    \omega_n = \left(\frac{3\pi g n}{4}\right)^{1/2}  + \frac{\Omega}{4} + O(n^{-1/2}).
\end{align*}
The analysis is based on a nonlinear spectral reduction that yields a confining massless relativistic oscillator above $\Omega$ and an inverted oscillator below $\Omega$.
\end{abstract}

\maketitle

\section{Introduction}

Quantum optical systems provide a framework for studying the interaction of light with matter at small scales. Such models exhibit several fundamental phenomena, including spontaneous emission, resonant energy exchange, radiative transport, and the formation of hybrid light-matter excitations. Due to the increasing application in quantum computing, nanophotonics, and precision measurement, there is a growing interest in mathematical models in quantum optics which can describe interactions with a large number of atoms. One such model describing the interaction between a scalar electromagnetic field and a collection of stationary two-level atoms distributed according to a number density $\rho(x)$ was introduced in \cite{kraisler_22} and subsequently studied in 
\cite{kraisler_23,Hoskins_21,Hoskins_23,NonlocPDE1,NonlocPDE2}. The equations of motion associated to states with at most one excitation are 
\begin{subequations}\label{eq:dynamics}
\begin{align}
      i\partial_t\phi &= c|D|\phi + g\sqrt{\rho(x)}\alpha, \\
    i\partial_t\alpha &= g\sqrt{\rho(x)}\phi + \Omega\alpha,
\end{align}
\end{subequations}
where $c$ is the speed of light in the vacuum, $g>0$ is a positive coupling constant, and $\Omega>0$ is the resonant frequency of the atoms (i.e. $\Delta E=\hbar\Omega$ is the energy difference between the ground and excited states of each atom). The operator $|D|=(-\partial_x^2)^{1/2}$ is a one-dimensional fractional Laplacian which is nonlocal and defined through the Fourier transform
\begin{align}\label{eq:frac_laplacian}
    \widehat{|D|f}(k) = |k|\widehat{f}(k).
\end{align}
Fractional analogs of the Schr\"odinger equation have been introduced in quantum mechanics \cite{Laskin_00,Laskin_2002}, and are well studied in the physics and mathematics literature \cite{zhang_2021,Frank_2008,Kaleta_2010,Felmer_2012, frank_2017,Hiroshima_2012,Daubechies_1983,Carmona_1990}. In our case, the nonlocal operator \eqref{eq:frac_laplacian} arises in a very physically natural way: the dispersion relation for a free electromagnetic field is linear in the momentum, $k\in\R$, and given by the formula $\omega(k)=c|k|$. Applying the inverse Fourier transform to $c|k|$ yields the operator $c|D|$. For more details of this derivation, see \cite{kraisler_22}. For a discussion of several equivalent definitions of this operator which make its nonlocal nature clearer, see \cite{Kwasnicki_2017}. \\

The functions $\phi(x,t)$ and $\alpha(x,t)$ in the system \eqref{eq:dynamics} are each assumed to be in $L^2(\R)$ and the normalization
\begin{align*}
    \int_{\R} |\phi(x,t)|^2 + |\alpha(x,t)|^2 dx = 1,
\end{align*}
is preserved by the unitary evolution generated by the self-adjoint Hamiltonian constructed below. This identity allows one to interpret $|\phi(x,t)|^2$ and $|\alpha(x,t)|^2$ as probability density functions for locating photons and excited atoms respectively. For notational simplicity, we define $L^2(\R;\C^2)=L^2(\R)\oplus L^2(\R)$ so that the pair $\Psi(x,t)=(\phi(x,t),\alpha(x,t))^{\top}\in L^2(\R;\C^2)$ for each $t\geq 0$. The system above can be recast as a single operator-valued matrix equation
\begin{align*}
    i\partial_t\Psi = H_{\rho}\Psi,\quad  \Psi\in L^2(\R;\C^2),
\end{align*}
where the Hamiltonian $H_{\rho}$ is formally defined 
\begin{align*}
    H_{\rho}=\begin{pmatrix}
    c|D| & g\sqrt{\rho}\\ g\sqrt{\rho} & \Omega
\end{pmatrix}.
\end{align*}

In this paper we are primarily interested in determining the spectrum of $H_{\rho}$ in the case that the density of atoms is modeled as a quadratic function of the spatial variable, i.e. $\rho(x)=\rho_0 x^2$. For notational convenience we will absorb the factor of $\sqrt{\rho_0}$ into the coupling constant $g$ and set the speed of light $c=1$. Our main results show that the spectrum of the relevant Hamiltonian decomposes into a continuous half-line and a sequence of simple eigenvalues above the resonant frequency $\Omega$. 
\begin{theorem}\label{thm:spec_nonlocal}
Let 
\begin{align}\label{eq:nonlocal_hamiltonian}
    H_0 = \begin{pmatrix}
        |D| & g|x|\\ g|x| & \Omega
    \end{pmatrix},\quad \mD(H_0)=\mS(\R;\C^2).
\end{align}
Then $H_0$ is essentially self-adjoint and we denote its unique self-adjoint extension by $H$. The spectrum of $H$ is given by
\begin{align*}
    \sigma_{\rm ess}(H) = \sigma_{c}(H) = (-\infty,\Omega],\quad \sigma_p(H)=\{\omega_n\ | \ n\geq 1\},
\end{align*}
where each $\omega_n$ is simple and
\begin{align*}
    \Omega < \omega_1 < \omega_2 < \cdots< \omega_n\to\infty.
\end{align*}
The corresponding eigenvectors are given, up to normalization, by
\begin{align*}
    \Psi_n(x) = \begin{pmatrix}
        \phi_n(x)\\
        \dfrac{g|x|}{\omega_n-\Omega}\phi_n(x)
    \end{pmatrix},\quad \widehat{\phi_n}(\xi) = \sgn(\xi)^{n+1}\Airya\left(\left(\dfrac{\omega_n-\Omega}{g^2}\right)^\frac{1}{3}(|\xi|-\omega_n)\right),\quad n\geq 1
\end{align*}
Moreover, we have the asymptotic formula
\begin{align*}
    \omega_n = \left(\frac{3\pi g n}{4}\right)^{1/2}  + \frac{\Omega}{4} + O(n^{-1/2}),
\end{align*}
from which we have the eigenvalue counting asymptotic
\begin{align*}
    N(\omega) := |\{ n\in\N \ | \ \omega_n\leq \omega\}| = \frac{4}{3\pi g}\omega^2 -\frac{2\Omega}{3\pi g}\omega +O(1).
\end{align*}
\end{theorem}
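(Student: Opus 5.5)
The plan is to reduce every part of the statement to scalar ODEs in Fourier space. The first step is to remove the non-smooth factor $|x|$. Let $U=\mathrm{diag}(1,\sgn x)$, which is unitary on $L^2(\R;\C^2)$. Since $\sgn(x)\,|x|=x$, the conjugation gives
\begin{align*}
U H_0 U^{*}=\begin{pmatrix} |D| & gx\\ gx & \Omega\end{pmatrix}=:K_0 .
\end{align*}
One has to check that the closures agree: $U$ does not preserve $\mS$, so I would show that $\mS$ and $U\mS$ are both cores for the maximal operator, using cutoffs near $x=0$ on the second component. After Fourier transform, $x$ becomes $i\partial_\xi$. So $\widehat{K_0}$ is the one-dimensional first-order Dirac-type system $ig\sigma_1\partial_\xi+\mathrm{diag}(|\xi|,\Omega)$, with a continuous, locally bounded potential.

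For one-dimensional Dirac systems the endpoints $\pm\infty$ are always in the limit point case, whatever the potential (the classical Weidmann/Levitan result). Hence the minimal operator on $C_c^\infty$ is essentially self-adjoint. Since the Schwartz domain lies between the minimal and maximal operators, this gives essential self-adjointness of $H_0$.

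For the spectrum above $\Omega$, take $\omega\neq\Omega$ and solve the second row: $\alpha=g|x|\phi/(\omega-\Omega)$. Substituting gives the nonlinear reduction
\begin{align*}
|D|\phi+\lambda x^2\phi=\omega\phi,\qquad \lambda=\frac{g^2}{\omega-\Omega}.
\end{align*}
In Fourier space this reads $-\lambda\psi''+(|\xi|-\omega)\psi=0$.

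For $\omega>\Omega$ we have $\lambda>0$. On each half-line the only $L^2$ solution is $\Airya(\lambda^{-1/3}(|\xi|-\omega))$. Gluing at $\xi=0$ gives two cases:
\begin{itemize}
\item an even solution, which requires $\Airya'(-\lambda^{-1/3}\omega)=0$;
\item an odd solution, which requires $\Airya(-\lambda^{-1/3}\omega)=0$.
\end{itemize}
Both amount to $F(\omega):=g^{-2/3}(\omega-\Omega)^{1/3}\omega=t$, where $t$ runs over $-a'_k$ or $-a_k$. The map $F$ is strictly increasing from $0$ to $\infty$ on $(\Omega,\infty)$. Because the Airy zeros interlace, $0<-a_1'<-a_1<-a_2'<\cdots$, each zero produces exactly one $\omega_n$, and consecutive eigenvalues alternate parity. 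This is the factor $\sgn(\xi)^{n+1}$, and it gives simplicity and strict monotonicity.

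For the asymptotics, use $t_n=(3\pi(2n-1)/8)^{2/3}(1+O(n^{-2}))$, which holds for both families. Cubing $F(\omega_n)=t_n$ gives $\omega_n^4(1-\Omega/\omega_n)=g^2(3\pi(2n-1)/8)^2(1+O(n^{-2}))$. Taking fourth roots and expanding $(1-\Omega/\omega)^{-1/4}=1+\Omega/(4\omega)+O(\omega^{-2})$ yields the stated formula for $\omega_n$, since the $-1/(2n)$ correction only contributes $O(n^{-1/2})$. Inverting this expansion gives $N(\omega)$.

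It remains to show that $\Omega$ itself is not an eigenvalue and that $\sigma(H)\cap(\Omega,\infty)$ is purely discrete. At $\omega=\Omega$ the second row forces $|x|\phi=0$, so $\phi=0$ and then $|x|\alpha=0$, so $\Omega$ is not an eigenvalue. For discreteness, I would use the Schur complement. For $z\in(\Omega,\infty)$ off the $\omega_n$, the resolvent $(H-z)^{-1}$ is expressed through $(|D|+\lambda(z)x^2-z)^{-1}$. The operator $|D|+\lambda x^2$ has compact resolvent for $\lambda>0$, which is clear in the Fourier picture as an Airy operator. This identifies the spectrum there with the zeros found above and shows it is discrete.

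Below $\Omega$ we have $\lambda<0$, and the reduced problem is an inverted oscillator. The Fourier ODE $-\lambda\psi''=(\omega-|\xi|)\psi$ has only oscillatory solutions of Airy type $\Airya,\Airyb$ with growing argument. These decay like $|\xi|^{-1/4}$, which is not square integrable, so there are no eigenvalues. Truncating and normalizing these generalized eigenfunctions gives Weyl sequences for every $\omega<\Omega$. Hence $(-\infty,\Omega]\subset\sigma_{\rm ess}$, the point $\Omega$ being included by closedness.

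The main obstacle is proving that the spectrum in $(-\infty,\Omega]$ is purely continuous, with no embedded eigenvalues and no singular continuous part issues. I would first attempt this with subordinacy/Gilbert–Pearson theory for the Dirac system in the $\xi$ variable. Since for every $\omega<\Omega$ no solution is subordinate at $\pm\infty$, this should give absolute continuity there. The care needed at the degenerate point $\omega=\Omega$, where $\lambda\to\infty$, is where I expect the argument to be most delicate.
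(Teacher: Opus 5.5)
Your proposal is correct. Apart from essential self-adjointness, it is the paper's argument:
\begin{itemize}
\item the elimination of $\alpha$, the Airy gluing in momentum space, and your crossing condition $F(\omega)=t$ (which is the paper's $\lambda_n(\omega)=\omega$ rearranged) all match;
\item so do the Schur complement with compact resolvent above $\Omega$ and the oscillatory Airy solutions below $\Omega$;
\item truncating generalized eigenfunctions is what the paper obtains by invoking Schnol's theorem for the inverted oscillator;
\item your zero asymptotics $t_n=(3\pi(2n-1)/8)^{2/3}(1+O(n^{-2}))$ are sharper than the paper's $r_n=(3\pi n/4)^{2/3}+O(n^{-1/3})$, though either gives the $O(n^{-1/2})$ remainder.
\end{itemize}

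The genuinely different step is essential self-adjointness. The paper applies the Faris--Lavine commutator criterion with $N=\mathrm{diag}(h_0,h_0)$, $h_0=-\partial_x^2+x^2+1$. It bounds $[|x|,-\partial_x^2]$ and $[|D|,x^2]$, including the $\delta$-terms at the origin, by the quadratic form of $N$. You instead conjugate by $\mathrm{diag}(1,\sgn x)$, so that the Fourier transform turns $H$ into a first-order Dirac system with continuous potential. You then use that such systems are always limit point at $\pm\infty$: the fundamental matrix has constant determinant, which rules out two independent solutions in $L^2$ near infinity.

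The core question you raise is harmless. The operator never differentiates the second component, so cutting $\alpha$ off near $x=0$ is continuous in graph norm. Moreover, vectors of $\mS(\R;\C^2)$ whose second component vanishes near $0$ lie in both $\mS$ and $\mathrm{diag}(1,\sgn x)\mS$. Your route avoids commutator estimates involving the kink of $|x|$ and the nonlocal $|D|$, and it gives simplicity of every eigenvalue automatically, from the limit point property at $+\infty$. The paper's route only uses that the coupling has bounded derivative, so it adapts to other densities. Your momentum-space picture, by contrast, is local only because $\sgn(x)|x|=x$ is linear.

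Your final paragraph overstates what remains. In the statement, as the paper uses it, $\sigma_c(H)$ is the spectrum with the eigenvalues removed. The paper proves nothing about absolute continuity. It gets $\sigma_c(H)=(-\infty,\Omega]$ solely from $(-\infty,\Omega]\subset\sigma(H)$ together with the absence of eigenvalues there, and you have already established both. A subordinacy argument would prove the stronger claim that the spectrum on $(-\infty,\Omega]$ is purely absolutely continuous. Even for that, the point $\Omega$ is not delicate: a single point that is not an eigenvalue carries no spectral measure.
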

The proof of the first claim, that $H_0$ is essentially self-adjoint employs an analog of Nelson's Theorem \cite{Nelson_1972} due to Faris and Lavine \cite{Faris_1974} and can be found in Appendix \ref{app:self_adjoint}.\\ 

Although this choice of $\rho(x)$ is not a globally realistic atomic density, it provides an exactly solvable model in which the effect of an unbounded spatially varying light-matter coupling can be analyzed completely. 
Our model follows a substantial body of work on quantum models with quadratic potentials. The problem of studying the Schr\"odinger equation with quadratic potential $V(x) = \harm x^2$, $k>0$, is referred to as the quantum simple harmonic oscillator (SHO) \cite{Griffiths_Schroeter_2018,Hall_13} and it is a fundamental example of an exactly solvable model in single-particle quantum mechanics. More precisely, the operator 
\begin{align}\label{eq:local_oscillator}
    H_{SHO} = -\partial_x^2 + \harm x^2,
\end{align}
has a family of eigenvalues $E_n = \sqrt{\dfrac{k}{2}}(2n+1)$, $n\geq 0$, which are equally spaced, and the associated family of eigenfunctions forms an orthonormal basis of $L^2(\R)$. The harmonic oscillator is a cornerstone of introductory quantum mechanics courses as it provides a reasonable approximation for the behavior of an electron in the presence of a smooth nondegenerate potential well. \\

A related exactly solvable model is the fractional simple harmonic oscillator (FSHO) \cite{Stephanovich_22,Jeng_10,Kowalski_10,Lorinczi_12}, whose Hamiltonian is given by 
\begin{align}\label{eq:nonlocal_oscillator}
    H_{FSHO} = |D| + \harm x^2.
\end{align}
Once again this operator has a discrete set of eigenvalues $E_n$, $n\geq 0$, for which the associated eigenvectors form an orthonormal basis of $L^2(\R)$. However, the eigenvalues are no longer evenly spaced and are given in terms of roots of the Airy function $\Airya$. We present a more detailed description of the spectrum of this operator in \Cref{thm:frac_oscillator}.\\

The rest of the paper is organized as follows. After briefly introducing some notation and convention in \Cref{sec:Notation}, we construct the self-adjoint Hamiltonian and develop the scalar spectral reduction in \Cref{sec:spec_prob}. In \Cref{sec:Eigenvalues} we complete the proof of \Cref{thm:spec_nonlocal} by determining the spectrum explicitly. Finally we include two appendices: one collecting some spectral properties of the inverted fractional harmonic oscillator whose precise statement the authors have not found in the literature, and a brief discussion of the local analog of the Hamiltonian \eqref{eq:nonlocal_hamiltonian}.

\section{Notation and Conventions}\label{sec:Notation}
We briefly recall some concepts and standardize our notation for the rest of the paper. 
\begin{itemize}
    \item The Fourier transform and its inverse are defined by
    \begin{align*}
        \mF[f](\xi)=\widehat{f}(\xi) = \int_{\R} f(x)e^{-i\xi x}dx,\quad \mF^{-1}[g](x) = \int_{\R} g(\xi)e^{i\xi x}\frac{d\xi}{2\pi}.
    \end{align*}
    \item We use the notation $f\lesssim g$ to mean there exists a constant $C>0$ such that
\begin{align*}
    f(x)\leq Cg(x).
\end{align*}
    \item We write $D = -i\partial_x$ so that the one-dimensional fractional Laplacian can be written $(-\Delta)^{1/2} = |D|$.
\end{itemize}

\section{The Hamiltonian  \eqref{eq:nonlocal_hamiltonian}}\label{sec:spec_prob}

The spectral problem associated to the Hamiltonian
\eqref{eq:nonlocal_hamiltonian} is
\begin{align*}
    H\Psi = \omega\Psi,\quad  \Psi\in L^2(\R;\C^2),
\end{align*}
for some $\omega\in \R$. Equivalently, this may be written as the pair of equations 
\begin{subequations}\label{eq:spectral_problem}
\begin{align}
      |D|\phi + g|x|\alpha&=\omega\phi, \\
    g|x|\phi + \Omega\alpha&=\omega\alpha ,
\end{align}
\end{subequations}
for $\phi,\alpha\in L^2(\R)$. We start by proving the resonant frequency of the atoms, $\Omega$, is not an $L^2(\R;\C^2)$-eigenvalue of $H$. 
\begin{lemma}\label{lem:no_resonance}
There are no nonzero solutions $(\phi,\alpha)^{\top}\in L^2(\R;\C^2)$ to the system \eqref{eq:spectral_problem} for $\omega = \Omega$.
\end{lemma}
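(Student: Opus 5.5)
With $\omega=\Omega$, the second equation of \eqref{eq:spectral_problem} loses its $\alpha$-term and becomes $g|x|\phi = 0$. Since $g>0$ and $|x|>0$ for almost every $x$, this forces $\phi=0$ almost everywhere. The first equation then reduces to $g|x|\alpha = \Omega\phi - |D|\phi = 0$, which in the same way gives $\alpha = 0$ almost everywhere. So the only solution is the zero vector.

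The one point that needs care is the sense in which the equations hold, because $H$ is the closure of $H_0$ and not the formal matrix operator. I would handle this as follows.

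\begin{itemize}
\item Read \eqref{eq:spectral_problem} as the statement $H\Psi=\Omega\Psi$ with $\Psi\in\mD(H)$. Equivalently, by self-adjointness, $\langle \Psi, (H_0-\Omega)\Theta\rangle = 0$ for all $\Theta\in\mS(\R;\C^2)$.
\item Take test vectors $\Theta=(0,\eta)^{\top}$ with $\eta\in\mS(\R)$. This gives $\int \overline{\phi}\, g|x|\eta\,dx = 0$ for all Schwartz $\eta$. Hence the locally integrable function $|x|\phi$ vanishes as a distribution, so $\phi=0$ almost everywhere.
\item Take test vectors $\Theta=(\theta,0)^{\top}$. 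Because $\phi=0$, this gives $\int \overline{\alpha}\, g|x|\theta\,dx=0$ for all $\theta\in\mS(\R)$. Hence $\alpha=0$ almost everywhere.
\end{itemize}

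Neither step needs the nonlocal operator $|D|$ to act on non-smooth functions, because $|D|$ only ever falls on Schwartz test functions. This avoids the only potential obstacle, namely making sense of $|D|\phi$ for a general $\phi\in L^2(\R)$.
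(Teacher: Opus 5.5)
Your proposal is correct and is the same argument as the paper's: at $\omega=\Omega$ the second equation forces $g|x|\phi=0$, hence $\phi=0$, and the first equation then forces $g|x|\alpha=0$, hence $\alpha=0$. Your extra step of testing $H\Psi=\Omega\Psi$ against $(0,\eta)^{\top}$ and $(\theta,0)^{\top}$ with Schwartz $\eta,\theta$ is valid; it only uses that $H$ is a symmetric extension of $H_0$, and it makes precise the sense of the component equations for $\Psi\in\mD(H)$, which the paper takes for granted.
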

\begin{proof}
Replacing $\omega$ by $\Omega$, the second equation becomes $g|x|\phi =0$ which implies $\phi=0$ in $L^2(\R)$. Substituting this back into the first equation we find $g|x|\alpha=0$ which similarly implies $\alpha=0$ in $L^2(\R)$. 
\end{proof}
Thus, if $\omega\in\R$ is an eigenvalue, the second equation may be solved for $\alpha$ and substituted back into the first equation to obtain
\begin{align*}
    \omega\phi = |D|\phi + \frac{g^2x^2}{\omega-\Omega}\phi.
\end{align*}
This motivates the following definition.
\begin{defn}
For $\omega\neq\Omega$, the operator $L_{\omega}$ is defined as the self-adjoint closure of
\begin{align*}
    L_{\omega}\phi = |D|\phi + \frac{g^2x^2}{\omega-\Omega}\phi,
\end{align*}
initially defined on $\mS(\R)$. We note that for $\omega>\Omega$, $\mD(L_{\omega}) = \mD(|D|)\cap \mD(x^2)$.
\end{defn}
It is also useful to define the operator $S_{\omega}=L_{\omega}-\omega$. We have the following lemma, which relates the spectrum of $H$ to that of $S_{\omega}$. 
\begin{lemma}\label{lem:spectral_criterion}
For $\omega\neq\Omega$, let $L_{\omega}$ and $S_{\omega}$ be defined as above. Then:
\begin{enumerate}
    \item $\omega\in \sigma_{p}(H)$ if and only if $\omega\in \sigma_{p}(L_{\omega})$. In particular
    \begin{align*}
        \ker(H-\omega) = \left\{ \begin{pmatrix}
            \phi \\ \dfrac{g|x|\phi}{\omega-\Omega}\end{pmatrix} \ | \ \phi\in \ker S_{\omega}
        \right\}.
    \end{align*}
    \item For $\omega > \Omega $ we have $\omega\in \rho(H)$ if and only if $0\in\rho(S_{\omega})$. In this case the inverse is given by
    \begin{align*}
        (H-\omega)^{-1} = \begin{pmatrix}
            S_{\omega}^{-1} & \frac{g}{\omega-\Omega}S_{\omega}^{-1}|x|\\
            \frac{g}{\omega-\Omega}|x|S_{\omega}^{-1} & -\frac{1}{\omega-\Omega} + \frac{g^2}{(\omega-\Omega)^2}|x|S_{\omega}^{-1}|x|
        \end{pmatrix}.
    \end{align*}
\end{enumerate}
\end{lemma}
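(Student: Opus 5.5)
The plan is a Schur-complement (Feshbach) reduction of $H-\omega$ onto the photon component, followed by a direct verification of the resolvent formula.

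\textbf{Part (1).} Let $\omega\neq\Omega$ and let $(\phi,\alpha)^\top\in\mD(H)$ satisfy $H\Psi=\omega\Psi$. The second component of \eqref{eq:spectral_problem} holds in $L^2$ and gives $\alpha = \frac{g|x|\phi}{\omega-\Omega}$. In particular $|x|\phi\in L^2$.

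\emph{Distributional step.} Pairing $H\Psi$ against test functions $(\chi,0)^\top$ with $\chi\in\mS(\R)$ shows that
\begin{align*}
|D|\phi + \frac{g^2x^2}{\omega-\Omega}\phi = \omega\phi
\end{align*}
holds in $\mS'$. This uses that $H$ is the closure of $H_0$, so $H^*=H$ acts distributionally on its domain.

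\emph{Domain step.} We must show $\phi\in\mD(L_\omega)$, i.e. $\phi\in\mD(L_\omega^*)$, since $L_\omega$ is self-adjoint. Because $L_\omega$ is the closure of its restriction to $\mS$, $L_\omega^*$ is the maximal distributional operator. Hence any $\phi\in L^2$ with $L_\omega\phi\in L^2$ in the distributional sense lies in $\mD(L_\omega)$, and then $\phi\in\ker S_\omega$.

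\emph{Converse.} Let $\phi\in\ker S_\omega$ and set $\alpha=g|x|\phi/(\omega-\Omega)$. One checks that $\alpha\in L^2$ and that $\Psi=(\phi,\alpha)^\top$ is in $\mD(H^*)=\mD(H)$ with $H\Psi=\omega\Psi$, again via distributional pairing against $\mS(\R;\C^2)$.

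The point needing care is $|x|\phi\in L^2$ for $\phi\in\mD(L_\omega)$. For $\omega>\Omega$ it follows from $\mD(L_\omega)\subset\mD(x^2)$ by interpolation. For $\omega<\Omega$ I would use the quadratic form, which shows $\phi\in\mD(|x|)$ for eigenfunctions. Alternatively, I would derive it from the Fourier-side ODE, where $x^2$ becomes $-\partial_\xi^2$ and eigenfunctions are explicit Airy-type functions decaying suitably.

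\textbf{Part (2).} Let $\omega>\Omega$.

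\emph{If $0\in\rho(S_\omega)$.} Define $R$ as the displayed matrix; I will show it is a bounded two-sided inverse of $H-\omega$.
\begin{itemize}
\item[(a)] The entries are bounded. Since $S_\omega\geq \frac{g^2x^2}{\omega-\Omega}-\omega$ in form sense, $S_\omega^{-1}$ maps $L^2$ into the form domain, which is contained in $\mD(|x|)$. The estimate $\||x|u\|^2\lesssim \langle (S_\omega+C)u,u\rangle$ gives boundedness of $|x|S_\omega^{-1}$. By taking adjoints, $S_\omega^{-1}|x|$ extends boundedly. The operator $|x|S_\omega^{-1}|x|$ is bounded as a form $L^2\to\mD(|x|)\to$ form domain $\to$ dual, via $|x|S_\omega^{-1/2}$-type factorization, using that $S_\omega+C\ge0$ and $0\in\rho(S_\omega)$.
\item[(b)] Block multiplication gives $(H-\omega)R=I$. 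Checking this on $\mS$ and extending by closure gives $R(H-\omega)=I$ on $\mD(H)$.
\end{itemize}
This shows $\omega\in\rho(H)$.

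\emph{If $\omega\in\rho(H)$.} Compress the resolvent to the first component: set $T=P_1(H-\omega)^{-1}P_1^*$. Using the second equation, one verifies that $T S_\omega u = u$ for $u\in\mS$. Applying this to $(\phi,\alpha)=(H-\omega)^{-1}(f,0)$ shows that $T$ is a bounded inverse of $S_\omega$, so $0\in\rho(S_\omega)$.

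\textbf{Main obstacle.} The hardest step is (a), controlling the unbounded coupling $|x|$ in the off-diagonal entries, together with the domain identification $\mD(H)$ versus the block structure. I would handle this with form bounds rather than explicit domain formulas, leaning on the self-adjointness from Appendix \ref{app:self_adjoint}.
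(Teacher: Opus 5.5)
Your proposal is correct. Part (1) and the implication $0\in\rho(S_\omega)\Rightarrow\omega\in\rho(H)$ follow the paper's Schur-complement route. Your distributional argument, using $H=H_0^*$ and that $L_\omega$ (the closure from $\mS$) is the maximal operator, is exactly what the paper's one-line step ``$|D|\phi+g|x|\alpha\in L^2$, hence $\phi\in\mD(L_\omega)$'' means. Your form bound $S_\omega\ge \frac{g^2x^2}{\omega-\Omega}-\omega$ is the paper's key estimate.

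The genuine difference is the implication $\omega\in\rho(H)\Rightarrow 0\in\rho(S_\omega)$. The paper proves the contrapositive: for $\omega>\Omega$, $L_\omega$ is an FSHO with compact resolvent, so $0\in\sigma(S_\omega)$ forces $0\in\sigma_p(S_\omega)$, and part (1) gives $\omega\in\sigma_p(H)$. You instead compress the resolvent, $T=P_1(H-\omega)^{-1}P_1^*$. This uses that $u\mapsto (u,\,g|x|u/(\omega-\Omega))^{\top}$ maps $\mS$ into $\mD(H)$ with $(H-\omega)$ acting as $(S_\omega u,0)^{\top}$, followed by closing over the core $\mS$. The paper's argument is shorter, but it relies on compactness and hence on $\omega>\Omega$. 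Yours uses neither, so it also gives $\omega\in\rho(H)\Rightarrow 0\in\rho(S_\omega)$ for $\omega<\Omega$. Combined with $\sigma(L_\omega)=\R$ from \Cref{thm:inverted_oscillator}, that yields $(-\infty,\Omega)\subset\sigma(H)$ directly, replacing the Weyl-sequence construction used later in the paper.

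You also go beyond the paper in checking that the Schur matrix $R$ really inverts $H-\omega$; the paper only writes it ``formally''. Note that the right-inverse identity ``by block multiplication'' presupposes $\operatorname{Ran}R\subset\mD(H)$. This is best obtained from your left-inverse identity $R(H-\omega)=I$ on $\mD(H)$ together with self-adjointness of $H$: $H-\omega$ is then injective with closed range, and its range is dense because $\operatorname{Ran}(H-\omega)^{\perp}=\ker(H-\omega)=\{0\}$.

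Two steps need tightening, though neither affects validity.

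First, there is no $S_\omega^{-1/2}$. For $\omega>\omega_1$ the operator $S_\omega$ has the negative eigenvalue $\lambda_1(\omega)-\omega$. The bound on $|x|S_\omega^{-1}|x|$ should factor through $L_\omega^{1/2}=(S_\omega+\omega)^{1/2}$, namely $(|x|L_\omega^{-1/2})\,\bigl(L_\omega(L_\omega-\omega)^{-1}\bigr)\,(L_\omega^{-1/2}|x|)$. The middle factor is bounded by the functional calculus because $0\in\rho(S_\omega)$. This is what your appeal to ``$S_\omega+C\ge 0$ and $0\in\rho(S_\omega)$'' should become, and it is precisely the paper's argument.

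Second, for $\omega<\Omega$ your suggested justifications of $|x|\phi\in L^2$ do not work as stated. The quadratic form of $L_\omega$ is indefinite and gives no control of $\||x|\phi\|$. The Fourier-side Airy solutions are oscillatory and not square integrable, so there are no ``suitably decaying'' eigenfunctions. What is true is that $\ker S_\omega=\{0\}$ for $\omega<\Omega$ by \Cref{thm:inverted_oscillator}, so the converse direction of part (1) is vacuous there. The paper's own sentence ``as $\phi\in\mD(L_\omega)$ we have $|x|\phi\in L^2$'' is likewise only justified for $\omega>\Omega$.
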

Combining this result with \Cref{lem:no_resonance}, we see that the $L^2(\R)$ solutions to \eqref{eq:spectral_problem} are completely determined by the eigenvalues of the one-parameter family of operators $\{ L_{\omega}\}_{\omega\neq \Omega}$.
\begin{proof}
Fix $\omega\neq\Omega$ and suppose that $\Psi=(\phi,\alpha)\in L^2(\R;\C^2)$ is an eigenvector of $H$ with eigenvalue $\omega$. The second component of the equation $(H-\omega)\Psi=0$ gives
\begin{align*}
    g|x|\phi -(\omega-\Omega)\alpha = 0 \implies \alpha = \frac{g|x|\phi}{\omega-\Omega}.
\end{align*}
As $|D|\phi+g|x|\alpha \in L^2$, we also have $\phi\in \mD(L_{\omega})$. Substituting the equation for $\alpha$ into the first equation yields
\begin{align*}
     (|D| + \frac{g^2x^2}{\omega-\Omega}-\omega)\phi = 0.
\end{align*}
 In particular, we find $L_{\omega}\phi = \omega\phi$ or $S_{\omega}\phi = 0$. Conversely, suppose that $\phi\in \ker(S_{\omega})$. Then we can define 
\begin{align*}
    \alpha = \frac{g|x|\phi}{\omega-\Omega}.
\end{align*}
As $\phi\in\mD(L_{\omega})$ we have $|x|\phi,\alpha\in L^2$. Additionally,
\begin{align*}
    |D|\phi + g|x|\alpha = |D|\phi + \frac{g^2 x^2}{\omega-\Omega}\phi = L_{\omega}\phi = \omega\phi \in L^2.
\end{align*}
Thus $(\phi,\alpha)^{\top}\in \mD(H)$ and reversing the calculation above shows
\begin{align*}
    H\begin{pmatrix}
        \phi\\ \alpha
    \end{pmatrix} = \omega\begin{pmatrix}
        \phi\\  \alpha
    \end{pmatrix}.
\end{align*}

Towards the proof of the second statement, suppose $\omega > \Omega$. If $0\in \sigma(S_{\omega})$ then as $S_{\omega}$ has compact resolvent, its spectrum is composed of discrete eigenvalues and hence $0\in\sigma_{p}(S_{\omega})$. By the first part of the lemma, $\omega\in \sigma_p(H)$ and hence $\omega\in \sigma(H)$. Conversely, suppose that $0\in\rho(S_{\omega})$. Then $S_{\omega}^{-1}$ exists as a bounded operator on $L^2$ and formally we may write the following expression for $(H-\omega)^{-1}$ using the Schur complement:
    \begin{align*}
        (H-\omega)^{-1} = \begin{pmatrix}
            S_{\omega}^{-1} & \frac{g}{\omega-\Omega}S_{\omega}^{-1}|x|\\
            \frac{g}{\omega-\Omega}|x|S_{\omega}^{-1} & -\frac{1}{\omega-\Omega} + \frac{g^2}{(\omega-\Omega)^2}|x|S_{\omega}^{-1}|x|
        \end{pmatrix}.
    \end{align*}
It remains to check that each entry defines a bounded operator on $L^2$.  As $L_{\omega}$ is a strictly positive, invertible operator satisfying 
\begin{align*}
    x^2 \leq \frac{(\omega-\Omega)}{g^2} L_{\omega},
\end{align*}
in the sense of quadratic forms, $|x|L_{\omega}^{-1/2}$ extends to a bounded operator. Moreover, $L_{\omega}^{1/2}S_{\omega}^{-1}$ is bounded as corresponds to the bounded function
\begin{align*}
    f(\lambda) = \frac{\sqrt{\lambda}}{\lambda-\omega},\quad \lambda \in \sigma(L_{\omega}). 
\end{align*}
through the spectral calculus for $L_{\omega}$. Similarly, the operator $L_{\omega}^{1/2}S_{\omega}^{-1}L_{\omega}^{1/2}$ is bounded as it corresponds to $g(\lambda)=\dfrac{\lambda}{\lambda-\omega}$. This is sufficient as now
\begin{align*}
    |x|S_{\omega}^{-1} = (|x|L_{\omega}^{-1/2})(L_{\omega}^{1/2}S_{\omega}^{-1}),
\end{align*}
is the composition of bounded operators and hence bounded. Additionally, $S_{\omega}^{-1}|x|=(|x|S_{\omega}^{-1})*$ and 
\begin{align*}
    |x|S_{\omega}^{-1}|x| = (|x|L_{\omega}^{-1/2})(L_{\omega}^{1/2}S_{\omega}^{-1}L_{\omega}^{1/2})(L_{\omega}^{-1/2}|x|),
\end{align*}
are both bounded as well.

\end{proof}

For each fixed $\omega\neq\Omega$, the operator $L_{\omega}$ is a one-dimensional Schr\"odinger operator with a quadratic potential. There are two cases.
\begin{itemize}
    \item For $\omega > \Omega$, $L_{\omega}$ is a fractional simple harmonic oscillator (FSHO).
    \item For $\omega < \Omega$, $L_{\omega}$ is a fractional inverted harmonic oscillator (FIHO).
\end{itemize}
As mentioned in the introduction, FSHOs have been well studied in physics and mathematics literature. We summarize the results below, a detailed proof of which can be found in \cite[Section 3]{Lorinczi_12}
\begin{theorem}[FSHO]\label{thm:frac_oscillator}
Let $H_{FSHO} = |D| + \harm x^2$. Then $H_{FSHO}$ is a positive self-adjoint operator on $L^2(\R)$ with purely discrete and simple spectrum $\sigma(H_{FSHO})=\{\lambda_n\}_{n\geq 1}$.
\begin{align*}
    \lambda_n = \left(\harm\right)^{1/3}r_n,
\end{align*}
where $r_{n}$ are the interlaced roots of the Airy function $\Airya(-z)$, for even $n$, and its derivative $\Airya'(-z)$, for odd $n$.
\end{theorem}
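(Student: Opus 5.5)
The plan is to pass to the Fourier side, where $|D|$ becomes multiplication by $|\xi|$ and $x^2$ becomes $-\partial_\xi^2$. Writing $\kappa=\harm$, the operator $H_{FSHO}$ is unitarily equivalent, via $\mF$ and up to the factor $2\pi$ in the norm, to the local Schr\"odinger operator
\begin{align*}
    T = -\kappa\,\partial_\xi^2 + |\xi| \quad \text{on } L^2(\R, d\xi).
\end{align*}
Self-adjointness and positivity are then standard. The potential $|\xi|$ is continuous, nonnegative and tends to $\infty$, so $T$ is essentially self-adjoint on $\mS(\R)$ and, by Rellich's criterion, has compact resolvent. Its spectrum is therefore purely discrete, and every eigenvalue is positive because $\langle Tf,f\rangle \geq \int |\xi||f|^2\,d\xi > 0$ for $f \neq 0$. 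Simplicity follows from the usual one-dimensional argument: two $L^2$ solutions of $-\kappa u''+(|\xi|-\lambda)u=0$ that both decay at $+\infty$ have vanishing Wronskian, so they are proportional.

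To identify the eigenvalues, I would use that the potential is even, so eigenfunctions can be taken even or odd and it suffices to solve the problem on $\xi>0$. There the equation reads $u''=\kappa^{-1}(\xi-\lambda)u$. Substituting $z=\kappa^{-1/3}(\xi-\lambda)$ turns it into the Airy equation $u_{zz}=zu$. The only solution decaying as $\xi\to+\infty$ is, up to scalars, $u(\xi)=\Airya\bigl(\kappa^{-1/3}(\xi-\lambda)\bigr)$, since $\Airyb$ grows. Extending $u$ evenly or oddly gives a function in $H^2_{loc}$, hence a genuine eigenfunction, exactly when the matching condition at $\xi=0$ holds:
\begin{align*}
    \text{even: } u'(0)=0 &\iff \Airya'(-\kappa^{-1/3}\lambda)=0, \\
    \text{odd: } u(0)=0 &\iff \Airya(-\kappa^{-1/3}\lambda)=0.
\end{align*}
Hence $\lambda=\kappa^{1/3}r$, where $r$ is a positive zero of $\Airya(-z)$ or of $\Airya'(-z)$. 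Conversely, every such zero produces an $L^2$ eigenfunction, because $\Airya$ decays superexponentially at $+\infty$.

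It remains to establish the ordering. The zeros of $\Airya$ and $\Airya'$ are all negative and interlace, with $a_1'>a_1>a_2'>a_2>\cdots$. This is a standard Sturm comparison fact; alternatively it follows from Sturm oscillation, since the $n$-th eigenfunction of $T$ has $n-1$ nodes and the parities must alternate. The ground state is even, so $r_1=-a_1'$ comes from $\Airya'$ and the odd-indexed eigenvalues come from $\Airya'$, while the even-indexed ones come from $\Airya$, as claimed.

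The main obstacle is this bookkeeping: making sure the $C^1$ matching at $\xi=0$ is both necessary and sufficient for membership in the operator domain despite the kink of $|\xi|$, and pinning down the indexing convention so that the labels line up. Both rely on standard one-dimensional Sturm--Liouville theory rather than any new estimate.
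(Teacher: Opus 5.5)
Your proposal is correct and follows essentially the same route as the paper, which cites Lőrinczi--Małecki for this statement but reproduces the argument in the proof of Theorem~\ref{thm:spec_nonlocal}: Fourier transform to $-\kappa\partial_\xi^2+|\xi|$, reduction to the Airy equation on $\xi>0$ with $\Airya$ selected by decay, and the parity conditions $\Airya'(-\kappa^{-1/3}\lambda)=0$ (even) and $\Airya(-\kappa^{-1/3}\lambda)=0$ (odd), which yield the interlaced $r_n$ with $\lambda_n=\kappa^{1/3}r_n$. Your extra justifications fill in details the paper leaves implicit: compact resolvent via $|\xi|\to\infty$, simplicity via the Wronskian, $H^2_{loc}$ matching at $\xi=0$, and the ordering via interlacing and Sturm oscillation.
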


Although the inverted harmonic oscillator (IHO) $-\partial_x^2+\harm x^2$ has been wellstudied \cite{Barton_87,Yuce_2006,Bolte_2017,Finster_2017}, FIHOs have not received much attention. For this reason, we summarize their spectral properties in a theorem which we prove in Appendix \ref{app:frac_inverted}.

\begin{theorem}[FIHO]\label{thm:inverted_oscillator}
The operator $H_{FIHO}=|D|-\harm x^2$ is a self-adjoint operator on $L^2(\R)$. $H_{FIHO}$ has no $L^2(\R)$-eigenvalues and the spectrum is the entire real line $\sigma(H_{FIHO})=\R$.
\end{theorem}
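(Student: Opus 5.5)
Write $\kappa = \harm > 0$, so that $H_{FIHO} = |D| - \kappa x^2$. The plan is to work on the Fourier side, where $x^2$ becomes $-\partial_\xi^2$ and $|D|$ becomes multiplication by $|\xi|$. Then $\mF H_{FIHO}\mF^{-1} = \kappa\partial_\xi^2 + |\xi|$, up to sign conventions, acting on $L^2(\R, d\xi/2\pi)$. This is, up to an overall sign, a Schr\"odinger operator $-\kappa\partial_\xi^2 - |\xi|$ with a potential $-|\xi|$ that is unbounded below and decays only linearly.

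\textbf{Self-adjointness.} A potential $V(\xi) = -|\xi|$ satisfies $V \geq -C(1+\xi^2)$. By the classical Sears/Ismagilov criterion, $-\kappa\partial_\xi^2 + V$ is therefore essentially self-adjoint on $\mS(\R)$ (equivalently on $C_c^\infty$). One can also say it is limit point at $\pm\infty$: the potential is bounded below by $-C\xi^2$, so classical particles do not escape in finite time. Since the operator is real and even, the self-adjoint closure is well defined, and conjugating back by $\mF$ gives self-adjointness of $H_{FIHO}$.

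\textbf{No eigenvalues.} The operator commutes with parity, so I would split into even and odd sectors. An eigenfunction $u$ with eigenvalue $\lambda$ solves, on $\xi>0$, $\kappa u'' + \xi u = \lambda u$. After rescaling $z = \kappa^{-1/3}(\xi-\lambda)$ this becomes Airy's equation $u'' = -z u$. Its solutions are combinations of $\Airya(-z)$ and $\Airyb(-z)$, and both decay only like $|z|^{-1/4}$ times an oscillating factor as $z\to+\infty$. Hence no nonzero solution lies in $L^2$ near $+\infty$, and the same holds on $\xi<0$. Consequently $\sigma_p = \emptyset$. One point needs checking: an $L^2$ eigenfunction of the self-adjoint operator is a distributional solution, which is $C^1$ across $\xi=0$ because the potential is continuous. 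So the ODE argument applies on each half-line, and the local analysis there is routine.

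\textbf{Spectrum is $\R$.} Since the WKB solutions are bounded and oscillatory with $|u|^2 \sim |z|^{-1/2}$, which is not integrable, I would construct Weyl sequences. Fix $\lambda$ and take $u = \Airya(-\kappa^{-1/3}(\xi-\lambda))$ for $\xi>0$. Multiply by cutoffs $\chi_R(\xi) = \chi((\xi - R)/R)$ supported in $[R,3R]$, with $R\to\infty$. Then $\|(T-\lambda)\chi_R u\| \lesssim \|\chi_R' u'\| + \|\chi_R'' u\|$. The $u$ and $u'$ terms have size about $R^{-1/4}$ and $R^{1/4}$ respectively on the support, so this bound is $\lesssim R^{-1}\cdot R^{1/4}\cdot R^{1/2} = R^{-1/4}$. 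Meanwhile $\|\chi_R u\| \sim (R\cdot R^{-1/2})^{1/2} = R^{1/4}$, so the normalized ratio is about $R^{-1/2}\to 0$.

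I expect this ratio estimate to be the main obstacle. The $u'$ term grows like $R^{1/4}$, so the bookkeeping must be done carefully: the derivative of the Airy function grows, and that growth has to be beaten by the cutoff scale. If the estimate is too tight, I would instead use the alternative $\chi_R$ supported on $[R, R+R^{\beta}]$ and optimize $\beta$. This gives $\lambda\in\sigma$ for every real $\lambda$, so $\sigma(H_{FIHO}) = \R$. As a sanity check, the scaling $\xi\mapsto \xi+c$ heuristically shifts the spectrum near infinity, which is consistent with the whole line.
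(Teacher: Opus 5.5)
Your proposal is correct. The self-adjointness and no-eigenvalue parts are essentially the paper's argument: Fourier conjugation, a Sears-type criterion (the kind of result the paper's Reed--Simon citation provides), and non-square-integrability of $\Airya(-z)$, $\Airyb(-z)$ at $+\infty$.

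The real difference is how you get $\sigma(H_{FIHO})=\R$. The paper applies Schnol's theorem, which needs a polynomially bounded solution on \emph{all} of $\R$. It builds $u_+=\Airyb'(-z_0)\Airya(-z)-\Airya'(-z_0)\Airyb(-z)$ so that $u_+'(0)=0$, uses the Wronskian to get $u_+(0)=1/\pi\neq 0$, extends evenly, and checks $C^2$ regularity at the kink. You instead truncate the half-line solution on $[R,3R]$. The origin never enters, so no matching is needed; this is the cutoff mechanism inside the proof of Schnol's theorem, done by hand.

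Your bookkeeping is right. The error terms are $O(R^{-1/4})$ and $\|\chi_R u\|\gtrsim R^{1/4}$, so the ratio is $O(R^{-1/2})$ and the fallback with $[R,R+R^{\beta}]$ is unnecessary. The one step you should write out is the \emph{lower} bound on $\|\chi_R u\|$. It does not follow from the pointwise bound $|u|\lesssim R^{-1/4}$. It follows from $u=\pi^{-1/2}z^{-1/4}\bigl(\sin(\tfrac23 z^{3/2}+\tfrac\pi4)+O(z^{-3/2})\bigr)$ together with the fact that the phase makes $\sim R^{3/2}$ turns on the plateau of $\chi_R$, so $\sin^2$ averages to $\tfrac12$ there.

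The paper's route is shorter, given the cited theorem. Yours is self-contained and quantitative. Also, taking disjoint supports (e.g.\ $R_k=4^k$) gives orthogonal Weyl sequences, which shows directly that every $\lambda$ lies in the essential spectrum.
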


\section{Spectrum of the Hamiltonian \eqref{eq:nonlocal_hamiltonian} }\label{sec:Eigenvalues}
In this section we study the $L^2(\R;\C^2)$-eigenvalues of the operator $H$ in \eqref{eq:nonlocal_hamiltonian} and ultimately prove our main result \Cref{thm:spec_nonlocal}. We first prove that there cannot exist any eigenvalues of $H$ which are smaller than the resonant frequency $\Omega$.
\begin{lemma}\label{lem:no_eigs}
If $\omega\in\R$ is an eigenvalue of $H$, then $\omega > \Omega$.     
\end{lemma}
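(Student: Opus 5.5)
We argue by contradiction, using the scalar reduction of \Cref{lem:spectral_criterion} together with the spectral theory of the fractional inverted oscillator. Suppose $\omega\in\sigma_p(H)$. By \Cref{lem:no_resonance}, $\omega\neq\Omega$, so it suffices to rule out $\omega<\Omega$.

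For $\omega<\Omega$, part (1) of \Cref{lem:spectral_criterion} shows that an eigenvector $\Psi=(\phi,\alpha)^{\top}$ of $H$ yields a nonzero $\phi\in\ker S_{\omega}$. Note that $\phi\neq 0$, since $\phi=0$ forces $\alpha=g|x|\phi/(\omega-\Omega)=0$. Hence $L_\omega\phi=\omega\phi$ with $\phi\in L^2(\R)$. In this regime we have
\begin{align*}
    L_{\omega}=|D|-\frac{g^2}{\Omega-\omega}x^2 ,
\end{align*}
which is exactly $H_{FIHO}$ with $\harm=g^2/(\Omega-\omega)>0$, i.e.\ $k=2g^2/(\Omega-\omega)$. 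By \Cref{thm:inverted_oscillator}, $H_{FIHO}$ has no $L^2(\R)$-eigenvalues. In particular $\omega$ is not an eigenvalue of $L_\omega$, which is a contradiction. Therefore $\omega>\Omega$.

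The only point requiring care is the domain bookkeeping. We need the $\phi$ produced from $\Psi\in\mD(H)$ to lie in the domain of the self-adjoint operator $H_{FIHO}$ to which \Cref{thm:inverted_oscillator} applies. Here $L_\omega$ is not semibounded, and its domain is not simply $\mD(|D|)\cap\mD(x^2)$.

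To handle this, we use that $\Psi\in\mD(H)$ and that $H$ is the closure of $H_0$. Take $\Psi_j\in\mS(\R;\C^2)$ with $\Psi_j\to\Psi$ and $H_0\Psi_j\to\omega\Psi$. Set $\alpha_j$ and $\phi_j$ as the components of $\Psi_j$. The second component converges:
\begin{align*}
    g|x|\phi_j-(\omega-\Omega)\alpha_j \to 0 .
\end{align*}
We then check that $S_\omega$ applied to $\phi_j$ also converges, up to a correction term. That correction is controlled by the fact that $|D|\phi_j+g|x|\alpha_j$ converges. This shows $\phi$ is in the domain of the closure of $S_\omega$ on $\mS(\R)$, with $S_\omega\phi=0$.

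Alternatively, we can argue directly as in the proof of \Cref{lem:spectral_criterion}. The identity $\alpha=g|x|\phi/(\omega-\Omega)$ holds almost everywhere, and $|D|\phi+g|x|\alpha\in L^2$. Together these give $L_\omega\phi=\omega\phi$ in the sense of the maximal operator. Since $L_\omega$ is essentially self-adjoint on $\mS$, the maximal operator coincides with the self-adjoint one, and the eigenvalue equation holds in $\mD(L_\omega)$.

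I expect this domain identification, rather than the reduction itself, to be the main technical step. Once it is settled, the lemma follows immediately from \Cref{thm:inverted_oscillator}.
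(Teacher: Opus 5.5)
Your core argument is the same as the paper's. You exclude $\omega=\Omega$ by \Cref{lem:no_resonance}, pass from an eigenvector of $H$ to $\phi\in\ker S_\omega$ via part (1) of \Cref{lem:spectral_criterion}, and contradict \Cref{thm:inverted_oscillator} with $\harm=g^2/(\Omega-\omega)$. The paper does no separate domain bookkeeping here; it relies on the assertion $\phi\in\mD(L_\omega)$ made inside the proof of \Cref{lem:spectral_criterion}.

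Your second route fills that in correctly. Since $H=\overline{H_0}$, pairing with test functions shows that $H$ acts componentwise in $\mS'$ on $\mD(H)$. Hence $\phi$ lies in the maximal domain of $L_\omega$. That maximal operator is $(L_\omega|_{\mS})^*$, which equals the closure by essential self-adjointness on $\mS$.

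Your first route, however, fails as stated. Write $\epsilon_j=g|x|\phi_j-(\omega-\Omega)\alpha_j$. Then
\begin{align*}
S_\omega\phi_j=\bigl(|D|\phi_j+g|x|\alpha_j-\omega\phi_j\bigr)+\frac{g|x|}{\omega-\Omega}\,\epsilon_j .
\end{align*}
Here $\epsilon_j\to0$ in $L^2$ gives no control of $|x|\epsilon_j$, and convergence of the first bracket does not supply it either.

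In fact, for an arbitrary approximating sequence, $S_\omega\phi_j$ need not converge. Replace $\phi_j$ by $\phi_j+\zeta_j$, where $\zeta_j(x)=j^{-3/2}\chi(x-j)$ and $\chi\in C_0^\infty(\R)$ is nonzero. Since $\zeta_j$, $|D|\zeta_j$ and $|x|\zeta_j$ all tend to $0$ in $L^2$, the modified sequence still satisfies $\Psi_j\to\Psi$ and $H_0\Psi_j\to\omega\Psi$. But $\|x^2\zeta_j\|_2\to\infty$, so $S_\omega\phi_j$ and $S_\omega(\phi_j+\zeta_j)$ cannot both converge. Drop the approximation argument and keep the maximal-operator one.
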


\begin{proof}
By \Cref{lem:no_resonance}, we already know that $\omega\neq \Omega$. Now suppose that $\omega < \Omega$ is an eigenvalue of $H$. By \Cref{lem:spectral_criterion}, this means that $\omega$ is an eigenvalue of $L_{\omega}$. However, making the identification
\begin{align*}
    \harm = \frac{g^2}{\Omega-\omega}, 
\end{align*}
we see that $L_{\omega} = |D|-\harm x^2$ cannot have any eigenvalues by \Cref{thm:inverted_oscillator}, contradicting the assumption that $\omega<\Omega$. 
\end{proof}

\begin{proof}[Proof of \Cref{thm:spec_nonlocal}]
By Proposition \ref{prop:self_adjoint} in Appendix \ref{app:self_adjoint}, it remains to determine the spectrum of $H$. Fix $\omega>\Omega$. \Cref{thm:frac_oscillator} states that the point spectrum of the operator $L_{\omega}$ is well understood for fixed $\omega$, but our goal is to determine which $\omega$ are eigenvalues of the corresponding operator $L_{\omega}$. We follow an argument similar to that of \cite[Section 3]{Lorinczi_12} along with an application of the intermediate value theorem. To this end, consider the eigenvalue problem
\begin{align*}
    L_{\omega}\phi =\lambda(\omega)\phi,\quad \phi\in L^2(\R),
\end{align*}
where we note that the eigenvalues depend on $\omega$. For the time being, we will suppress this dependence. Applying the Fourier transform we have
\begin{align*}
    |\xi|\fpsi - \frac{g^2}{\omega-\Omega}\partial_{\xi}^2\fpsi = \lambda\fpsi\implies \partial_{\xi}^2\fpsi - \frac{(\omega-\Omega)(|\xi|-\lambda)}{g^2}\fpsi=0,
\end{align*}
which is a Schr\"odinger operator with a piecewise linear trapping potential. As the potential is an even function, every solution can be obtained from even and odd solutions. For $\xi>0$ we make the change of variables $z = \left(\dfrac{\omega-\Omega}{g^2}\right)^\frac{1}{3}(\xi-\lambda)$ and obtain the Airy equation
\begin{align}\label{eq:Airy}
    \partial_z^2\fpsi -z\fpsi =0 ,
\end{align}
where we have kept the notation $\fpsi$ by an abuse of notation. Every solution to \eqref{eq:Airy} is a linear combination of $\Airya(z)$ and $\Airyb(z)$. As $\displaystyle\lim_{z\to\infty} \Airyb(z)=\infty$, we must have that
\begin{align*}
    \fpsi(\xi) = \Airya\left(\left(\dfrac{\omega-\Omega}{g^2}\right)^\frac{1}{3}(\xi-\lambda)\right),\quad \xi > 0.
\end{align*}
The odd and even solutions are, respectively, characterized by the conditions at $\xi=0$:
\begin{align*}
     \Airya\left(-\left(\dfrac{\omega-\Omega}{g^2}\right)^\frac{1}{3}\lambda)\right)=0,\quad  \Airya'\left(-\left(\dfrac{\omega-\Omega}{g^2}\right)^\frac{1}{3}\lambda)\right)=0.
\end{align*}
The functions $\Airya(z)$ and $\Airya'(z)$ each have a discrete set of zeros on the negative real axis, which we call $\{a_k\}$ and $\{a_k'\}$ respectively. We use $r_n$ for $n\geq1$ to denote the sequence of (interlaced) roots of $\Airya$ and $\Airya'$. Specifically, 
\begin{align*}
    r_{2k-1} = |a_k'|,\quad r_{2k} = |a_k|,\quad k\geq 1.
\end{align*}
Note that $r_n$ is an increasing sequence of positive real numbers. For each $\omega>\Omega$, this provides a sequence of simple eigenvalues $\{\lambda_n(\omega)\}$ defined
\begin{align*}
    \lambda_n(\omega) = \left(\frac{g^2}{\omega-\Omega}\right)^{1/3}r_n.
\end{align*}
To recover the eigenvalues of $H$, we seek solutions to the identity $\lambda_n(\omega)=\omega$. For each $n\in\N$, the function $\lambda_n:(\Omega,\infty)\to (0,\infty)$ has the following properties
\begin{enumerate}
    \item $\lambda_n$ is a differentiable function of $\omega$.
    \item $\lambda_n$ is a decreasing function of $\omega$. 
    \item $\lambda_n$ has the following asymptotics
    \begin{align*}
        \lim_{\omega\to\Omega^{+}}\lambda_n(\omega) = +\infty,\quad \lim_{\omega\to\infty}\lambda_n(\omega) = 0.
    \end{align*}
\end{enumerate}
The first and third properties are immediate. The second follows as 
\begin{align*}
    \frac{d\lambda_n}{d\omega} = -r_n\frac{g^{2/3}}{3}\left(\omega-\Omega\right)^{-4/3} < 0, \quad \omega > \Omega.
\end{align*}
Hence, by the intermediate value theorem, the graph of $\lambda_n(\omega)$ crosses the diagonal exactly once, see \Cref{fig:crossings}, for each $n$; call this value $\omega_n$. As $r_n$ is strictly increasing, it follows that $\omega_n$ is strictly increasing as well. The corresponding eigenfunctions are, up to normalization, the inverse Fourier transforms of
\begin{align*}
    \fpsi_n(\xi) &= \sgn(\xi)^{n+1}\Airya\left(\left(\dfrac{\omega_n-\Omega}{g^2}\right)^\frac{1}{3}(|\xi|-\omega_n)\right),\quad n\geq 1.
\end{align*}
Finally, to obtain the asymptotics, we see that each $\omega_n$ must satisfy
\begin{align*}
     \omega_n = \left(\frac{g^2}{\omega_n-\Omega}\right)^{1/3}r_n\implies \omega_n^4 -\Omega\omega_n^3 -g^2r_n^3 = 0.
\end{align*}
There is precisely one positive real root to the fourth degree polynomial above. It is clear that as $r_n\to\infty$, $\omega_n\to\infty$. Known asymptotics of the Airy function $\Airya(z)$ and its derivative \cite{Stegun_64} provide the leading order asymptotics for the sequence $r_n$.
\begin{align*}
    r_n = \left(\frac{3\pi n}{4}\right)^{2/3} + O(n^{-1/3}).
\end{align*}
Thus for sufficiently large $n$
\begin{align*}
    \omega_n = \left(\frac{3\pi g n}{4}\right)^{1/2}  + \frac{\Omega}{4} + O(n^{-1/2}).
\end{align*}
Letting $a = \left(\frac{3\pi g}{4}\right)^{1/2}$, the asymptotics $\omega_n = a\sqrt{n}+\frac{\Omega}{4}+O(n^{-1/2})$ can be inverted to obtain
\begin{align*}
    n = \frac{1}{a^2}\left(\omega_n - \frac{\Omega}{4}\right)^2 + O(1),
\end{align*}
which implies
\begin{align*}
    N(\omega) = \frac{4}{3\pi g}\omega^2 -\frac{2\Omega}{3\pi g}\omega +O(1).
\end{align*}
If $\omega > \Omega$ then $L_{\omega}$ is a positive operator with compact resolvent. Thus $S_{\omega}=L_{\omega}-\omega$ fails to be invertible precisely when $\omega \in \sigma_p(L_{\omega})$, i.e. at one of the $\omega_n$ for $n\geq 1$. If $\omega>\Omega$ is not one of these values, then $0\in\rho(S_{\omega})$ and hence $\omega\in \rho(H)$ by Lemma \ref{lem:spectral_criterion}.\\ 

Next, consider $\omega < \Omega$. Then once again by \Cref{thm:inverted_oscillator}, the operator $L_{\omega}$ has spectrum which is equal to all of $\R$. Thus by Weyl's criterion there is a sequence $\phi_n\in\mD(L_{\omega})$ of nonzero vectors such that
\begin{align*}
    \frac{\|(L_{\omega}-\omega)\phi_n\|_{L^2(\R)}}{\|\phi_n\|_{L^2(\R)}}\to 0.
\end{align*}
As $\mS(\R)$ is a core for $L_{\omega}$, there is no harm in assuming that Weyl sequence is chosen so that $\phi_n\in \mS(\R)$. Define $\alpha_n(x) = \dfrac{g|x|}{\omega-\Omega}\phi_n(x)$. Then we have $|x|\phi_n, |D|\phi_n,|x|\alpha_n\in L^2(\R)$, whence $\Psi_n=(\phi_n,\alpha_n)\in\mD(H)$. Moreover
\begin{align*}
    (H-\omega)\Psi_n = \begin{pmatrix}
        (L_{\omega}-\omega)\phi_n\\ 0
    \end{pmatrix},
\end{align*}
and $\|\Psi_n\|_{L^2(\R;\C^2)}^2 = \displaystyle\int_{\R}\left(1+\frac{g^2 x^2}{(\omega-\Omega)^2}\right)|\phi_n(x)|^2 dx\geq \|\phi_n\|_{L^2(\R)}^2$. This implies
\begin{align*}
    \frac{\|(H-\omega)\Psi_n\|_{L^2(\R;\C^2)}}{\|\Psi_n\|_{L^2(\R;\C^2)}} \leq  \frac{\|(L_{\omega}-\omega)\phi_n\|_{L^2(\R)}}{\|\phi_n\|_{L^2(\R)}}\to 0.
\end{align*}
By \cite[Theorem 9.18]{Hall_13}, we have $\omega\in\sigma(H)$. As the spectrum of $H$ is closed, we also have $\Omega\in\sigma(H)$. Lemmas \ref{lem:no_resonance} and \ref{lem:no_eigs} show that there are no eigenvalues in $(-\infty,\Omega].$ As the eigenvalues $\omega_n$, for $n\geq 1$, are simple and isolated, they belong to the discrete spectrum of $H$. Consequently $\sigma_{\rm ess}(H)=(-\infty,\Omega]$ and $\sigma_{p}(H)=\{\omega_n\}$. 
\end{proof}

\begin{figure}[ht]
    \centering
    \includegraphics[width=\textwidth]{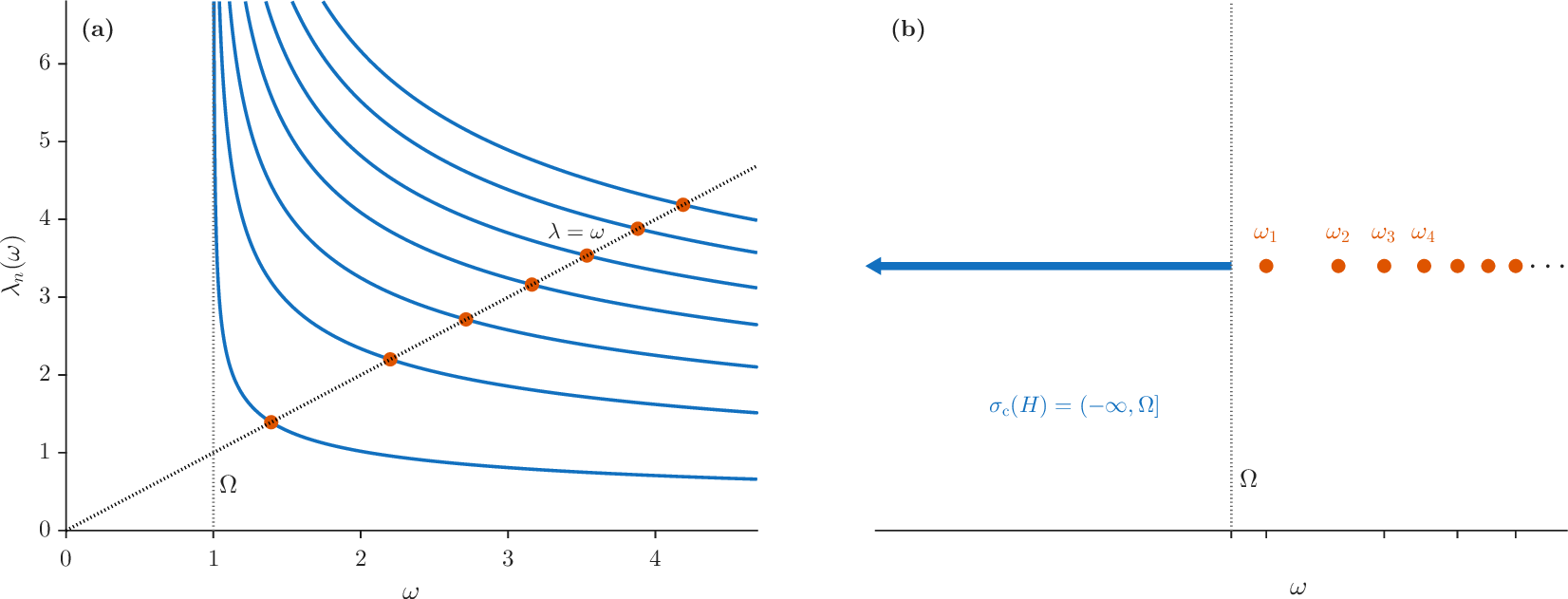}
    \caption{(a) The first seven decreasing eigenvalue branches $$\lambda_n(\omega) = \left(\frac{g^2}{\omega-\Omega}\right)^{1/3}r_n,\quad \omega>\Omega.$$ Each branch intersects the diagonal $\lambda=\omega$ once and these intersections are the eigenvalues $\omega_n$ of the Hamiltonian $H$. (b) A schematic of the spectrum $\sigma(H)$ showing the continuous half-line $(-\infty,\Omega]$ and the first several eigenvalues $\omega_n$. }
    \label{fig:crossings}
\end{figure}

\section{Discussion}

The spectrum of the Hamiltonian \eqref{eq:nonlocal_hamiltonian} arises due to the combination of several physical mechanisms. The field has a linear dispersion relation governed by the fractional Laplacian $|D|$, while the matter has a flat band at $\Omega$. The coupling term $g|x|$ mixes these two components, most strongly at large $x$ values. For any $\omega\neq\Omega$, after eliminating the atomic component, the spectral problem reduces to studying the nonlinear eigenvalue problem
\begin{align*}
    |D|\phi + \frac{g^2 x^2}{\omega-\Omega}\phi = \omega\phi.
\end{align*}
For $\omega > \Omega$ this equation has a discrete set of solutions $\omega_n$ converging to infinity such that $\omega_n = O(n^{1/2})$. For $\omega < \Omega$, this equation has no solutions in $L^2(\R)$, but has a sequence of normalized approximate solutions in the sense that
\begin{align*}
    \left\| |D|\phi_n + \frac{g^2 x^2}{\omega-\Omega}\phi_n-\omega\phi_n\right\|_{L^2(\R)}\to 0.
\end{align*}
As the spectrum of $H$ is a closed set, we also have $\Omega\in\sigma(H)$. Thus we find that the spectrum of the matrix-valued Hamiltonian is given by
\begin{align*}
    \sigma(H)=(-\infty,\Omega]\cup \{\omega_n\},
\end{align*}
with a spectral gap $(\Omega,\omega_1)$ of positive width. This is an example of an exactly solvable nonlocal model in quantum optics. The nonlocality naturally arises as the real-space representation of a field with linear dispersion relation. Moreover, the nonlocality of the operator has a clear impact on the growth rate of the eigenvalues. For the operator \eqref{eq:nonlocal_hamiltonian} the eigenvalues grow asymptotically as
\begin{align*}
    \omega_n \sim \left(\frac{3\pi g}{4}\right)^{1/2}n^{1/2},\quad n\to\infty. 
\end{align*}
while the analysis of the local operator in Appendix \ref{app:local} gives eigenvalues which behave as $n^{2/3}$ for large $n$. \\

Although the quadratic density is highly idealized, as it produces an unbounded coupling to an infinite atomic medium, the model shows that large atomic densities can provide hybrid light-matter bound states above the resonant frequency $\Omega$ with a continuous spectral band below. 

\section*{Acknowledgements}
This work was supported in part by Amherst College through the Gregory S. Call Academic Intern program, which provided undergraduate research funding for SJ.

\appendix

\section{Fractional Inverted 
Oscillator}\label{app:frac_inverted}
In this section we prove \Cref{thm:inverted_oscillator} on the spectral properties of the fractional inverted oscillator
\begin{align*}
    H = |D| - \harm x^2.
\end{align*}
First we cite a result due to Schnol \cite{schnol_1957,Shubin_1992}.
\begin{theorem}[Schnol]\label{thm:Schnol}
Let $H = -\Delta + V$, $V\in L^{\infty}_{loc}(\R^d)$, be a Schr\"odinger operator on $\R^d$ with $V_{-}(x) = -\min(0,V(x))=o(|x|^2)$. If there exists a solution to
\begin{align*}
    Hu = \lambda u,
\end{align*}
with $|u(x)|\leq C(1+|x|)^N$, for some constant $C>0$, then $\lambda\in \sigma(H)$.
\end{theorem}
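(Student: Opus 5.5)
We plan to construct a Weyl sequence by cutting off the polynomially bounded solution $u$ at suitable dyadic scales, and then to conclude with Weyl's criterion \cite[Theorem 9.18]{Hall_13}, exactly as in the proof of \Cref{thm:spec_nonlocal}. The only genuine work is to show that the commutator error is small compared with the mass of the truncated function. We assume $H$ is realized as a self-adjoint operator with $C_c^\infty$ (or its closure) as a core; the lower bound $V_-=o(|x|^2)$ guarantees this, and we will use the same condition again quantitatively. We also assume, as is implicit in the statement, that $u$ is a weak solution in $H^1_{loc}$; local elliptic regularity gives this since $V\in L^\infty_{loc}$.

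\textbf{Step 1 (Weyl sequence and commutator).} Fix $\chi\in C_c^\infty(\R^d)$ with $\chi=1$ on $B_1$, $\operatorname{supp}\chi\subset B_2$, and set $\chi_R(x)=\chi(x/R)$. Then $\chi_R u\in\mD(H)$, and since $Hu=\lambda u$,
\begin{align*}
    (H-\lambda)(\chi_R u) = -2\nabla\chi_R\cdot\nabla u - (\Delta\chi_R)u .
\end{align*}
Both terms are supported in the annulus $A_R=B_{2R}\setminus B_R$, with $|\nabla\chi_R|\lesssim R^{-1}$ and $|\Delta\chi_R|\lesssim R^{-2}$.

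\textbf{Step 2 (Caccioppoli bound on the gradient).} Let $\eta$ be a cutoff equal to $1$ on $A_R$, supported in $\tilde A_R=B_{4R}\setminus B_{R/2}$, with $|\nabla\eta|\lesssim R^{-1}$. Test the equation against $\eta^2\bar u$ and use $V\geq -V_-$. This gives
\begin{align*}
    \int\eta^2|\nabla u|^2 \lesssim \int\left(|\nabla\eta|^2+\eta^2\left(V_-+|\lambda|\right)\right)|u|^2 .
\end{align*}
Hence $\|\nabla u\|_{L^2(A_R)}\lesssim \left(R^{-1}+\sup_{\tilde A_R}V_-^{1/2}+1\right)\|u\|_{L^2(\tilde A_R)}$. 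Multiplying by $R^{-1}$ and using $V_-=o(|x|^2)$, so that $\sup_{\tilde A_R}V_-^{1/2}=o(R)$, we obtain
\begin{align*}
    \|(H-\lambda)\chi_R u\|_{L^2}\le \varepsilon(R)\,\|u\|_{L^2(B_{4R})},\qquad \varepsilon(R)\to 0 .
\end{align*}
This is the one place where the growth hypothesis on $V_-$ is used. It is also the step I expect to be the main obstacle: the dyadic width of the transition region has to be matched exactly against $V_-$, and a thinner transition layer would fail.

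\textbf{Step 3 (choice of scales).} Let $a(R)=\|u\|_{L^2(B_R)}$. Then $\|\chi_R u\|\ge a(R)$, so
\begin{align*}
    \frac{\|(H-\lambda)\chi_R u\|}{\|\chi_R u\|}\le \varepsilon(R)\,\frac{a(4R)}{a(R)} .
\end{align*}
It therefore suffices to find $R_k\to\infty$ along which $a(4R_k)/a(R_k)$ stays bounded. The polynomial bound gives $a(R)\lesssim R^{N+d/2}$, and $a(R)>0$ for large $R$ because $u\neq0$. Suppose $a(4^{k+1})>M\,a(4^k)$ for all large $k$, with $M>4^{N+d/2}$. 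Then $a$ would grow like $M^k$ along $R=4^k$, which contradicts the polynomial bound. Hence infinitely many $R_k=4^k$ satisfy $a(4R_k)\le M a(R_k)$.

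Along these scales the normalized vectors $\chi_{R_k}u/\|\chi_{R_k}u\|$ satisfy $\|(H-\lambda)\psi_k\|\to0$. By Weyl's criterion, $\lambda\in\sigma(H)$.
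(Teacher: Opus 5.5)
The paper gives no proof of this statement: it cites Schnol and Shubin and uses the result only as a black box in the proof of \Cref{thm:inverted_oscillator}, so there is no internal argument to compare with. Your proof is correct. It is the standard cutoff, Caccioppoli and Weyl-sequence argument.

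The dyadic Caccioppoli estimate gives
$\|(H-\lambda)\chi_R u\|\lesssim \bigl(R^{-2}+R^{-1}+R^{-1}\sup_{\tilde A_R}V_-^{1/2}\bigr)\|u\|_{L^2(B_{4R})}$.
The decay of the last term, $R^{-1}\sup_{\tilde A_R}V_-^{1/2}\to 0$, is exactly what $V_-=o(|x|^2)$ supplies. Your pigeonhole argument on $a(4^{k+1})/a(4^k)$ then correctly turns the polynomial bound into a bounded doubling ratio along a subsequence.

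Compared with the bare citation, your version is self-contained. It also shows that semiboundedness of $H$ is never used. This matters, because the paper applies the theorem to $-\partial_\xi^2-a|\xi|$, which is not bounded below; there $u_\lambda$ is bounded, so $N=0$.

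A few hypotheses are left implicit. They are harmless but worth stating:
\begin{itemize}
\item $u\not\equiv 0$.
\item $o(|x|^2)$ is meant in the essential-supremum sense.
\item $H$ is the closure of $-\Delta+V$ on $C_c^\infty$.
\end{itemize}
The last point ensures $\chi_R u\in\mD(H)$. Indeed, $\Delta u=(V-\lambda)u\in L^\infty_{loc}$ gives $u\in H^2_{loc}$, so $\chi_R u$ is a compactly supported $H^2$ function. Mollifying it converges in graph norm, since $V$ is bounded near its support.
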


\begin{proof}[Proof of \Cref{thm:inverted_oscillator}]

Taking the Fourier transform we have
\begin{align*}
    \mathcal{F}H\mathcal{F}^{-1} = |\xi| +\harm\partial_{\xi}^2 = -\harm\left( -\partial_{\xi}^2- a|\xi|\right),\quad a = \left(\harm\right)^{-1}.
\end{align*}
Up to the overall factor of $-\harm$, this is the Hamiltonian $\widetilde{H}= -\partial_{\xi}^2-a|\xi|$, which is essentially self-adjoint on $C_0^{\infty}(\R)$ \cite[Theorem X.38]{simon_reed_V2}. Moreover, one observes that any solution to the eigenvalue problem
\begin{align*}
    \widetilde{H}\phi = \lambda\phi,
\end{align*}
must satisfy $-\phi''-a|\xi|\phi = \lambda\phi$. Considering $\xi>0$ and making the change of variables $z=z_0+a^{1/3}\xi$ with $z_0=a^{-2/3}\lambda$ this reduces to the Airy equation
\begin{align*}
    \partial_z^2\phi + z\phi = 0\implies \phi(z) = c_1\Airya(-z) + c_2\Airyb(-z).
\end{align*}
The functions $\Airya(-z)$ and $\Airyb(-z)$ satisfy the following asymptotics as $z\to\infty$ \cite{Stegun_64}
\begin{align*}
    \Airya(-z) &\sim \frac{1}{\sqrt{\pi}}\frac{1}{z^{1/4}}\sin\left(\frac{2}{3}z^{3/2}+\frac{\pi}{4}\right), \\
    \Airyb(-z) &\sim \frac{1}{\sqrt{\pi}}\frac{1}{z^{1/4}}\cos\left(\frac{2}{3}z^{3/2}+\frac{\pi}{4}\right).
\end{align*}
Due to these asymptotics, any nonzero choice of $c_1$ and $c_2$ leads to a function which is not square integrable at $+\infty$. Hence there are no eigenvalues of $\widetilde{H}$ and correspondingly no eigenvalues of $H$.\\

Now we prove, using Theorem \ref{thm:Schnol}, that every $\lambda\in\R$ is in the spectrum, $\sigma(\widetilde{H})$, of $\widetilde{H}$. The necessary hypothesis on the potential are satisfied as $V_{-}(\xi)=a|\xi|=o(|\xi|^2)$. We proceed by constructing an even solution to the eigenvalue problem above.  Fix $\lambda\in\R$ and for $\xi\geq 0$ define
\begin{align*}
    u_{+}(\xi) = \Airyb'(-z_0)\Airya(-z)-\Airya'(-z_0)\Airyb(-z),\quad z = z_0+a^{1/3}\xi,\quad z_0 =  a^{-2/3}\lambda.
\end{align*}
We see that this function is nonzero as
\begin{align*}
    u_{+}(0) = \Airyb'(-z_0)\Airya(-z_0)-\Airya'(-z_0)\Airyb(-z_0) = \frac{1}{\pi}.
\end{align*}
Finally let $u_{\lambda}(\xi)=u_{+}(|\xi|)$. Clearly $u_{\lambda}$ is even and satisfies $(\widetilde{H}-\lambda)u_{\lambda} = 0$. 
Moreover, $u_{\lambda}'(0) = 0$ and hence $u_{\lambda}$ extends to a $C^1$ function. To see that $u_{\lambda}$ is in fact, $C^2$, note that it satisfies
\begin{align*}
    u''_{\lambda} = -(a|\xi|+\lambda)u_{\lambda}.
\end{align*}
By the asymptotics of the functions $\Airya(z)$ and $\Airyb(z)$, we have the asymptotic bound
\begin{align*}
    |u_{\lambda}(\xi)|^2 & \leq C_{\lambda} z^{-1/2},\quad z\to\infty.
\end{align*}
Hence by \Cref{thm:Schnol}, $\lambda\in\sigma(\widetilde{H})$. Consequently, $-\harm\lambda \in\sigma(H)$, which implies $\sigma(H)=\R$. 
    
\end{proof}

\section{Essential self-adjointness of the Hamiltonian $H_0$}\label{app:self_adjoint}
In this Appendix we prove the Hamiltonian \eqref{eq:nonlocal_hamiltonian} is essentially self-adjoint.

\begin{prop}\label{prop:self_adjoint}
The operator $H_0 = \begin{pmatrix}
        |D| & g|x|\\ g|x| & \Omega
    \end{pmatrix}$ defined on $\mS(\R;\C^2)$ is essentially self-adjoint. 
\end{prop}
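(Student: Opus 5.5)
We would apply the commutator theorem of Faris and Lavine \cite{Faris_1974}, in the Nelson form. It says that a symmetric operator $A$ on a core $\mathcal{C}$ of a self-adjoint $N\geq 1$ is essentially self-adjoint on $\mathcal{C}$ provided
\begin{align*}
    \|A\Psi\|\leq C\|N\Psi\|,\qquad |\la A\Psi,N\Psi\ra-\la N\Psi,A\Psi\ra|\leq C\|N^{1/2}\Psi\|^2,\qquad \Psi\in\mathcal{C}.
\end{align*}
We take $\mathcal{C}=\mS(\R;\C^2)$ and $N=(D^2+x^2+1)\otimes I_{\C^2}$, the harmonic oscillator acting diagonally. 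This $N$ is self-adjoint and at least $1$, and $\mS$ is a core for it because its Hermite eigenfunctions lie in $\mS$. Symmetry of $H_0$ on $\mS(\R;\C^2)$ is immediate, since $|D|$ and multiplication by $g|x|$ are symmetric and $|x|\phi\in L^2$ for $\phi\in\mS$.

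\textbf{Step 1 (relative bound).} For $\Psi=(\phi,\alpha)$ we have
\begin{align*}
    \|H_0\Psi\|\lesssim \||D|\phi\|+\|x\alpha\|+\|x\phi\|+\|\alpha\|.
\end{align*}
Each term on the right is bounded by $\|N^{1/2}\Psi\|$, using $\||D|f\|^2+\|xf\|^2\leq\la Nf,f\ra$. Since $N\geq1$, this is in turn bounded by $\|N\Psi\|$.

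\textbf{Step 2 (commutator bound).} The $\Omega$ entry commutes with $N$. The term $|D|$ commutes with $D^2$, and $|x|$ commutes with $x^2$. Hence we must control two quadratic forms:
\begin{align*}
    \la |D|\phi,x^2\phi\ra-\la x^2\phi,|D|\phi\ra \quad\text{and}\quad \la |x|\alpha,D^2\phi\ra-\la D^2\alpha,|x|\phi\ra,
\end{align*}
together with the symmetric counterparts. For the second form we integrate by parts directly in $x$. Since $|x|$ is Lipschitz with derivative $\sgn(x)$, for Schwartz $\phi,\alpha$ we get
\begin{align*}
    \la |x|\alpha,\phi''\ra-\la\alpha'',|x|\phi\ra=\int \sgn(x)\,(\overline{\alpha}\phi'-\overline{\alpha}'\phi)\,dx .
\end{align*}
No boundary term survives, because $|x|$ is continuous at $0$ and the jump of its derivative enters only through $\sgn$ after one integration. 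This expression is bounded by $2\|\alpha\|_{H^1}\|\phi\|_{H^1}\lesssim\|N^{1/2}\Psi\|^2$. For the first form we pass to the Fourier side, where $x=i\partial_\xi$ and $|D|$ becomes $|\xi|$. The computation is then identical with the roles of $x$ and $\xi$ exchanged. It yields $\int\sgn(\xi)(\cdots)\,d\xi$, bounded by $\|\fpsi\|_{H^1_\xi}^2\lesssim\|(1+x^2)^{1/2}\phi\|^2\leq\la N\phi,\phi\ra$.

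\textbf{Main obstacle.} The delicate point is the non-smoothness of $|x|$ and $|\xi|$ at the origin. Formally, $[|\xi|,\partial_\xi^2]$ produces a term $-2\delta(\xi)$, and $[|x|,\partial_x^2]$ produces a $\delta(x)$. We would handle this by working only at the level of quadratic forms as above, rather than with operator commutators. If a point term $|f(0)|^2$ (or $\overline{\alpha(0)}\phi(0)$) does appear in some bookkeeping, it is still harmless: the one-dimensional Sobolev inequality $|f(0)|^2\leq\|f\|_{H^1}^2$ bounds it by $\la N\Psi,\Psi\ra$. Once the two bounds are in place, the Faris--Lavine theorem gives that $H_0$ is essentially self-adjoint on $\mS(\R;\C^2)$.
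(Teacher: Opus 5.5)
Your proof is correct and takes the paper's route: the Faris--Lavine commutator theorem with the diagonal harmonic oscillator $N$, the same relative bound $\|H_0\Psi\|\lesssim\|N\Psi\|$, and the same commutator estimate, with $[|D|,x^2]$ handled by passing to the Fourier side. The only difference is cosmetic. You integrate by parts in the symmetrized quadratic form, so the $\delta$-contribution is absorbed into $\int\sgn(x)(\overline{\alpha}\phi'-\overline{\alpha}'\phi)\,dx$; the overall sign there is actually opposite to what you wrote, which does not affect the bound. The paper instead keeps the term $2\delta(x)f(0)$ and controls it with the trace bound $|f(0)|^2\le q(f)$.
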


\begin{proof}
We apply the Faris-Lavine commutator criterion \cite[Corollary 1.1]{Faris_1974}. First note that $H_0$ is symmetric on $\mS(\R;\C^2)$ by a formal calculation. Next, define the operators $h_0 = -\partial_x^2+x^2+1$ and $N = \begin{pmatrix}
    h_0 & 0\\ 0 & h_0
\end{pmatrix}$. The operator $N$ is a self-adjoint operator and satisfies $N\geq I$. Additionally, the Schwartz class $\mS(\R;\C^2)$ is a core for $N$ for which $\mS(\R;\C^2)\subset \mD(H_0)$. We verify the two hypotheses of the theorem. For $f\in\mS(\R)$, define the quadratic form associated to $h_0$.
\begin{align*}
    q(f):=\la f, h_0 f\ra  = \|f'\|_{2}^2+ \|xf\|_{2}^2 + \| f\|_{2}^2.
\end{align*}
As $h_0\geq I$, we have $\|f\|_2 \leq \|h_0f\|_{2}.$ Therefore,
\begin{align*}
    q(f) =\la f,h_0f\ra \leq \|f\|_2 \|h_0f\|_2 \leq \|h_0f\|_2^2.
\end{align*}
In particular
\begin{align*}
    \| |x|f\|_2\leq \|h_0f\|_2,\quad \| |D|f\|_2 = \| f'\|_2\leq \|h_0f\|_2.
\end{align*}
For $\Psi = (\phi,\alpha)^{\top}\in\mS(\R;\C^2)$ we have
\begin{align*}
    H_0\Psi = \begin{pmatrix}
        |D|\phi + g|x|\alpha\\
        g|x|\phi + \Omega\alpha
    \end{pmatrix},
\end{align*}
and 
\begin{align*}
    \|H_0\Psi\|_2^2 &= \||D|\phi + g|x|\alpha\|_2^2 + \|g|x|\phi + \Omega\alpha\|_2^2\\
    &\leq 2(\||D|\phi\|_2^2 + g^2\||x|\alpha\|_2^2 + g^2\||x|\phi\|_2^2 + \Omega^2\|\alpha\|_2^2)\\
    &\leq C\left(\|h_0\phi\|_2^2 + \|h_0\alpha\|_2^2\right) = C\|N\Psi\|_2^2.
\end{align*}
Thus it remains to show that there exists $C>0$ such that
\begin{align*}
    |\la \Psi,[H_0,N]\Psi\ra|\leq C\la \Psi,N\Psi\ra,\quad \forall\Psi = (\phi,\alpha)^{\top}\in \mS(\R;\C^2).
\end{align*}
The only terms in the commutator which do not vanish are 
\begin{align*}
    [|x|,-\partial_x^2],\quad  [|D|,x^2].
\end{align*}
We start by estimating the first. For $f\in\mS(\R)$
\begin{align*}
    [|x|,-\partial_x^2]f = 2\delta(x)f(0) + 2\sgn(x)f'(x),
\end{align*}
in the sense of distributions. Additionally, $|f(0)|\leq q(f)^{1/2}$ as
\begin{align*}
    |f(0)|^2 &=\int_{-\infty}^0 \frac{d}{dx}|f(x)|^2 dx\\
    &\leq 2\int_{-\infty}^0 |f(x)||f'(x)| dx\\
    &\leq 2\|f\|_{2}\|f'\|_{2}\\
    &\leq \|f\|_{2}^2+\|f'\|_{2}^2\leq q(f).
\end{align*}
Whence for $f,h\in \mS(\R)$
\begin{align*}
    |\la h, [|x|,-\partial_x^2]f\ra | &\leq 2|f(0)||h(0)| + 2\|f'\|_{2}\|h\|_{2}
    \leq C_1(q(f)^{1/2}q(h)^{1/2}) \leq C_2(q(f)+q(h)).
\end{align*}
The commutator $[|D|,x^2]$ satisfies the same estimate (up to a factor of $2\pi$). Indeed, by the properties of the Fourier transform
\begin{align*}
    \mF[|D|,x^2]\mF^{-1} = [|\xi|,-\partial_{\xi}^2],
\end{align*}
and the quadratic form associated to $h_0$ is invariant under the Fourier transform. Using these bounds the total commutator bound follows directly. Let $\Psi = (\phi,\alpha)^{\top}\in \mS(\R;\C^2)$. Then
\begin{align*}
     |\la \Psi,[H_0,N]\Psi\ra | &= |\la \phi, [|D|,x^2]\phi\ra +g\la \phi, [|x|,-\partial_x^2]\alpha\ra + g\la \alpha,[|x|,-\partial_x^2]\phi\ra | \\
     &\leq C(q(\phi)+q(\alpha)) = C\la \Psi,N\Psi\ra .
\end{align*}
Hence $H_0$ is essentially self-adjoint on $\mS(\R;\C^2)$.
\end{proof}

\section{Local Oscillator}\label{app:local}

In this section we consider the local analog of the eigenvalue problem \eqref{eq:spectral_problem} where the fractional Laplacian $|D|$ is replaced by the classical one-dimensional Laplacian $D^2= -\partial_x^2$. Specifically, we wish to find solutions to $H_{\rm loc}\Psi = \omega\Psi$ where $\Psi\in L^2(\R;\C^2)$ and the Hamiltonian $H_{\rm loc}$ is given by
\begin{align}\label{eq:local_Hamiltonian}
    H_{\rm loc} = \begin{pmatrix}
        -\partial_x^2 & g|x|\\ g|x| & \Omega
    \end{pmatrix}.
\end{align}
Physically, this corresponds to studying a scalar field with a quadratic dispersion relation $\omega(k)=k^2$ as observed in exciton-polariton condensates \cite{Amo_2009,Pinsker_2017,Wouters_2007,Ciuti_2005}.

\begin{theorem}\label{thm:spec_local}
Consider the Hamiltonian $H_{\rm loc}$ given in \eqref{eq:local_Hamiltonian}.
The collection of $L^2(\R;\C^2)$-eigenvalues of $H_{\rm loc}$ comprises an unbounded increasing sequence $\{\omega_n\}$ for which $\omega_n > \Omega$ for all $n\geq 0$. Moreover, we have the asymptotic formula as $n\to\infty$.
\begin{align*}
    \omega_n = (g(2n+1))^{2/3} + \frac{\Omega}{3} + O(n^{-2/3}).
\end{align*}
\end{theorem}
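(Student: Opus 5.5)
The plan is to repeat the reduction of \Cref{sec:spec_prob} with $|D|$ replaced by $-\partial_x^2$. That turns the matrix eigenvalue problem into a one-parameter family of scalar oscillator problems, whose spectra are known in closed form.

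\textbf{Reduction.} Write $H_{\rm loc}\Psi=\omega\Psi$ with $\Psi=(\phi,\alpha)^{\top}\in L^2(\R;\C^2)$.
\begin{itemize}
\item If $\omega=\Omega$, the argument of \Cref{lem:no_resonance} applies verbatim and forces $\phi=0$, then $\alpha=0$.
\item If $\omega\neq\Omega$, the second equation gives $\alpha=g|x|\phi/(\omega-\Omega)$. As in part (1) of \Cref{lem:spectral_criterion}, $\omega$ is an eigenvalue of $H_{\rm loc}$ if and only if $\omega$ is an $L^2$-eigenvalue of
\begin{align*}
    L^{\rm loc}_\omega=-\partial_x^2+\frac{g^2x^2}{\omega-\Omega}.
\end{align*}
In that case the eigenvectors are $(\phi,\,g|x|\phi/(\omega-\Omega))^{\top}$ with $\phi\in\ker(L^{\rm loc}_\omega-\omega)$.
\end{itemize}
The domain bookkeeping is identical to that proof: $\phi\in\mD(\partial_x^2)\cap\mD(x^2)$ exactly when $\Psi$ lies in the natural domain of $H_{\rm loc}$. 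This establishes the analogue of part (1) of \Cref{lem:spectral_criterion} only. Part (2) relied on the compact resolvent of the fractional oscillator, and nothing here needs the resolvent.

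\textbf{Case $\omega<\Omega$.} Here $L^{\rm loc}_\omega=-\partial_x^2-\harm x^2$ with $\harm=g^2/(\Omega-\omega)$, which is the inverted harmonic oscillator. I must show it has no $L^2$ eigenfunctions, and I expect this to be the main obstacle, since \Cref{thm:inverted_oscillator} only covers the fractional case. I would try a Liouville/energy argument first. For a real solution of $u''=-(\harm x^2+\lambda)u$ and large $x$, set
\begin{align*}
    E(x)=u(x)^2+\frac{u'(x)^2}{\harm x^2+\lambda}.
\end{align*}
Differentiating shows $(\log E)'\geq -x^{-1}-O(x^{-3})$, so $E(x)\gtrsim x^{-1}$. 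Since $u$ oscillates with local frequency $\sim\sqrt{\harm}\,x$, the averages of $u^2$ over successive half-periods are comparable to $E$. This gives $\int^\infty u^2\,dx\gtrsim\int^\infty x^{-1}\,dx=\infty$, so no nonzero solution is square integrable at $+\infty$. If that estimate becomes messy, the fallback is the explicit route: the solutions are parabolic cylinder functions $D_{\nu}\bigl(e^{\pm i\pi/4}c\,x\bigr)$, whose standard asymptotics show every nonzero solution has modulus comparable to $|x|^{-1/2}$ at infinity. Either way there are no eigenvalues below $\Omega$.

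\textbf{Case $\omega>\Omega$ and asymptotics.} Here $L^{\rm loc}_\omega$ is the harmonic oscillator \eqref{eq:local_oscillator} with $\harm=g^2/(\omega-\Omega)$. It has simple eigenvalues
\begin{align*}
    \lambda_n(\omega)=\frac{g(2n+1)}{\sqrt{\omega-\Omega}},\qquad n\geq 0.
\end{align*}
Each $\lambda_n$ is smooth and strictly decreasing on $(\Omega,\infty)$, tends to $+\infty$ as $\omega\to\Omega^+$, and tends to $0$ as $\omega\to\infty$. By the intermediate value theorem it meets the diagonal $\lambda=\omega$ exactly once, at some $\omega_n>\Omega$. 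Equivalently, $\omega_n$ is the unique root in $(\Omega,\infty)$ of
\begin{align*}
    \omega^2(\omega-\Omega)=c_n^2,\qquad c_n=g(2n+1).
\end{align*}
Because $c_n$ increases with $n$, the $\omega_n$ increase strictly and are unbounded. The eigenvector is given by the Hermite function, composed with the reduction above.

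For the asymptotics, rewrite the equation as $\omega=c_n^{2/3}(1-\Omega/\omega)^{-1/3}$. From this, $\omega_n\sim c_n^{2/3}\to\infty$. Expanding,
\begin{align*}
    \omega_n=c_n^{2/3}\left(1+\frac{\Omega}{3\omega_n}+O(\omega_n^{-2})\right)=c_n^{2/3}+\frac{\Omega}{3}+O(c_n^{-2/3}),
\end{align*}
which is the claimed formula with error $O(n^{-2/3})$.
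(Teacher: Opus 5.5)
Your proposal follows the paper's proof: the same reduction to $L^{\rm loc}_\omega$, the explicit oscillator eigenvalues $g(2n+1)/\sqrt{\omega-\Omega}$, and the unique root of $\omega^2(\omega-\Omega)=g^2(2n+1)^2$. The paper gets the asymptotics by setting $b_n=\sqrt{\omega_n-\Omega}$ and expanding the root of $b^3+\Omega b=g(2n+1)$, which is equivalent to your direct expansion. For $\omega<\Omega$ it simply cites the literature on the inverted oscillator.

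One step of yours is wrong as stated. It does not sink the proof, because your fallback works. With $q=\harm x^2+\lambda$ one has $E'=-u'^2q'/q^2$, so $(\log E)'=-\frac{q'}{q}\cdot\frac{u'^2/q}{E}$. At every zero of $u$ that ratio equals $1$. The best pointwise bound is therefore $(\log E)'\geq -q'/q=-2/x+O(x^{-3})$, not $-x^{-1}$. That only yields $E\gtrsim x^{-2}$, which is integrable, so the energy argument as written does not exclude $L^2$ solutions.

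The rate $-1/x$ holds only on average over an oscillation. To make it rigorous, use Liouville--Green/Pr\"ufer variables $u=Rq^{-1/4}\sin\theta$, $u'=Rq^{1/4}\cos\theta$. For these, $(\log R)'=-\frac{q'}{4q}\cos 2\theta$ and $\theta'=\sqrt{q}+O(x^{-1})$. One integration by parts, writing $\cos 2\theta=(\sin 2\theta)'/(2\theta')$, shows that $\log R$ converges. Hence $u^2\geq c\,x^{-1}\sin^2\theta$, and $\int^\infty x^{-1}\sin^2\theta\,dx=\infty$ by the same integration by parts applied to the $\cos 2\theta$ part.

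Your parabolic-cylinder fallback is sound, but phrase it through the complex basis $D_\nu(e^{\pm i\pi/4}cx)$, where $\mathrm{Re}\,\nu=-\tfrac12$. Each of these has modulus comparable to $|x|^{-1/2}$. A real solution vanishes at its zeros, so it is $|u|^2$ averaged over oscillations, not $|u|$ pointwise, that is comparable to $|x|^{-1}$. The cross term in $|a u_1+b u_2|^2$ oscillates and is integrable against $x^{-1}$.
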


\begin{proof}
The Hamiltonian has a self-adjoint realization using an even simpler version of the Faris-Lavine argument used in Proposition \ref{prop:self_adjoint} (as the commutator $[-\partial_x^2,x^2]=-4x\partial_x-2$). Using the spectral-reduction argument of \Cref{lem:spectral_criterion}, $\omega\neq \Omega$ is an $L^2(\R;\C^2)$-eigenvalue of $H_{\rm loc}$ precisely when it is an $L^2(\R)$-eigenvalue of $L_{\omega}^{\rm loc}$
\begin{align*}
   L_{\omega}^{\rm loc} = -\partial_x^2 + \frac{g^2x^2}{\omega-\Omega}.
\end{align*}
A nearly identical argument to the one in the main body of the text shows that $\omega\neq\Omega$. Additionally, if $\omega<\Omega$, $L_{\omega}^{\rm loc}$ is an inverted oscillator and has no eigenvalues \cite{Barton_87,krason2020,Finster_2017}. Finally, suppose $\omega > \Omega$. In this case, the operator above is the Hamiltonian associated to a quantum harmonic oscillator, whose spectrum is discrete and well known.
\begin{align*}
    \lambda_n(\omega) = \frac{g(2n+1)}{\sqrt{\omega-\Omega}}.
\end{align*}
Thus, $\omega$ is an eigenvalue of $H$ exactly when
\begin{align*}
    \omega = \frac{g(2n+1)}{\sqrt{\omega-\Omega}},
\end{align*}
for some $n\geq 0$. For each such $n$ there is exactly one solution given by
\begin{align*}
    \omega_n\sqrt{\omega_n-\Omega} = g(2n+1).
\end{align*}
Defining $b_n = \sqrt{\omega_n-\Omega}$, then 
\begin{align*}
    b_n^3+\Omega b_n = g(2n+1),\quad \omega_n = \Omega + b_n^2.
\end{align*}
As $n\to\infty$, we have the asymptotics 
\begin{align*}
    b_n = (g(2n+1))^{1/3} -\frac{\Omega}{3}(g(2n+1))^{-1/3}+O(n^{-1})\,
\end{align*}
from which we obtain
\begin{align*}
     \omega_n = \Omega + b_n^2 = (g(2n+1))^{2/3} + \frac{\Omega}{3} + O(n^{-2/3}),\quad n\to\infty.
\end{align*}
\end{proof}

\printbibliography
\end{document}